\newtheorem{thm}{Theorem}
\newtheorem{cor}{Corollary}
\newtheorem{lem}{Lemma}
\newcommand{\Rmnum}[1]{\expandafter\@slowromancap\romannumeral #1@}
\begin{document}

\sloppy
\title{Dual-Zone Hard-Core Model for RTS/CTS Handshake Analysis in WLANs}
\author{Yi Zhong, \emph{Senior Member, IEEE}, Zhuoling Chen, Wenyi Zhang, \emph{Senior Member, IEEE}, Martin Haenggi, \emph{Fellow, IEEE}
\thanks{
Yi Zhong and Zhuoling Chen are with the School of Electron. Inf. \& Commun., Huazhong University of Science and Technology, Wuhan, China. Wenyi Zhang is with the University of Science and Technology of China. Martin Haenggi is with the University of Notre Dame, US. 
The research has been supported by the National Natural Science Foundation of China (NSFC) grant No. 62471193. Preliminary findings of this work were presented in the IEEE International Conference on Communications (ICC) \cite{6883616}. 

The corresponding author is Yi Zhong (yzhong@hust.edu.cn).}
}

\maketitle

\thispagestyle{plain} 

\begin{abstract}
This paper introduces a new stochastic geometry-based model to analyze the Request-to-Send/Clear-to-Send (RTS/CTS) handshake mechanism in wireless local area networks (WLANs). We develop an advanced hard-core point process model, termed the dual-zone hard-core process (DZHCP), which extends traditional hard-core models to capture the spatial interactions and exclusion effects introduced by the RTS/CTS mechanism. This model integrates key parameters accounting for the thinning effects imposed by RTS/CTS, enabling a refined characterization of active transmitters in the network. Analytical expressions are derived for the intensity of the DZHCP, the mean interference, and an approximation of the success probability, providing insight into how network performance depends on critical design parameters. Our results provide better estimates of interference levels and success probability, which could inform strategies for better interference management and improved performance in future WLAN designs.
\end{abstract}

\section{Introduction}
\label{sec:intro}
\subsection{Motivations}
As a robust complement to the 5G and future 6G mobile communication networks \cite{zhong2024toward}, wireless local area networks (WLANs) will continue to play a pivotal role in data transmission processes. 
The widespread adoption of WLANs in recent years has significantly improved personal access to wireless networks. The IEEE 802.11 media access control (MAC) standard \cite{ieee1997wireless} introduces both physical and virtual carrier sensing techniques to prevent data collisions, a critical issue in densely populated network environments. Physical carrier sensing allows a transmitter to detect active transmissions within its range and to defer its activity if necessary \cite{hasan2007guard}. In contrast, virtual carrier sensing, implemented through the Request-to-Send/Clear-to-Send (RTS/CTS) handshake mechanism, addresses the hidden terminal problem, where a node visible to the receiver is obscured from the transmitter \cite{tobagi1975packet}. This mechanism establishes a protective region around both the transmitter and receiver, effectively minimizing collision risks in overlapping transmission regions (Figure \ref{fig:Pair}).

\begin{figure}[ht]
\centering
\includegraphics[width=0.4\textwidth]{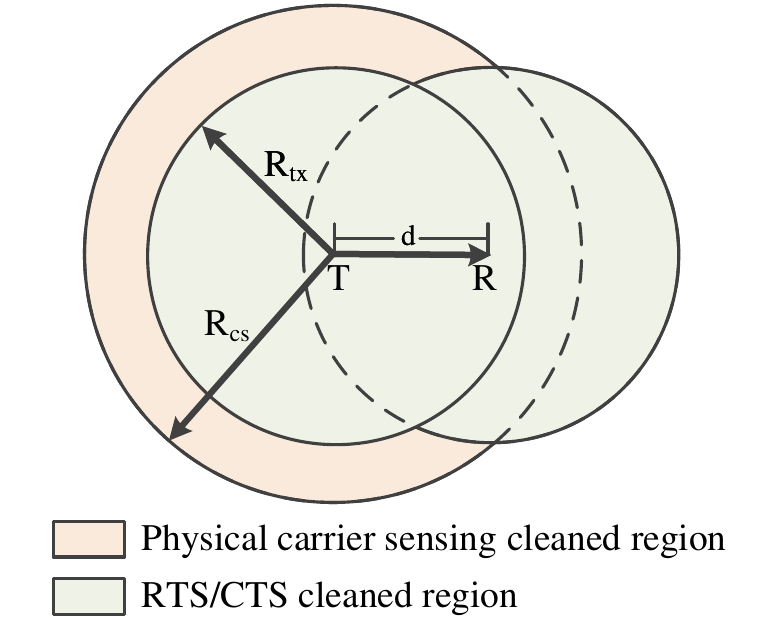}
\caption{Illustration of a transceiver pair employing the RTS/CTS mechanism, showing the dual-zone exclusion regions for interference management. The transmitter (T) and receiver (R) are separated by a distance $d$. 
The orange region, with radius $R_{\mathrm{cs}}$, corresponds to the physical carrier sensing range, which prevents nearby transmitters from causing interference at the transmitter (T). The green region, with radius $R_{\mathrm{tx}}$ and centered at the receiver (R), denotes the RTS/CTS virtual carrier sensing region, which protects the receiver from potential interference by blocking transmissions within this zone.}

\label{fig:Pair}
\end{figure}

Despite considerable research on the performance of 802.11 WLANs \cite{xu2001does, weinmiller1995analyzing, cali1998ieee, bianchi2000performance, xu2003effectiveness}, a precise analysis of the RTS/CTS mechanism's network impact remains challenging. The complexity arises from the randomness in the spatial distribution of nodes and the significant correlations between active transmitters. Understanding the impact of RTS/CTS design parameters, such as carrier sensing ranges on network interference, is crucial for optimizing network performance. 

This paper introduces a novel spatial distribution model specifically designed for analyzing the RTS/CTS mechanism. Traditional models, such as the homogeneous Poisson point process (PPP) and the Matérn hard-core process \cite{haenggi2009stochastic, zhong2020spatio, haenggi2011mean}, fall short due to the unique spatial dependencies introduced by RTS/CTS. Our approach utilizes a marked point process that more accurately characterizes the spatial relationships between transmitters and receivers within this framework. Each node in this process not only indicates the position of a transmitter but also includes a mark representing the receiver's relative location. The influence of RTS/CTS on this model is analogous to the effect observed in Matérn hard-core models on PPPs \cite{matern1986spatial}. Using the proposed model, we derive formulas for calculating the mean interference, accounting for all potential network configurations. Moreover, we employ approximations to derive the success probability of transmissions. A key contribution of this work lies in the application of advanced techniques for stochastic processes, such as reduced Palm expectations and second-order factorial measures, which provide valuable geometric insights into network behavior. Our results highlight the influence of node density and sensing parameters on the effectiveness of the RTS/CTS mechanism.

\subsection{Related Work}
The effective management of network interference and optimization of access protocols in wireless networks, particularly under the RTS/CTS mechanism, has been a focal area of research. 
In \cite{9060334}, the authors investigated the influence of the RTS threshold on the network sum rate, under the condition that the network always operates in the access mode.
The authors of \cite{9369453} propose a new chain RTS/CTS scheme for multihop communication in aerial networks, which achieves 37$\%$ throughput improvement in multihop links with 68$\%$ increased connection establishment time compared to existing IEEE 802.11 Independent Basic Service Set (IBSS).
In \cite{baccelli2012optimizing}, the authors proposed an optimal power control strategy designed to maximize the access probability in carrier sense multiple access (CSMA) protocols by utilizing a max-interference model. While this approach marks a significant advancement in power control tactics, it notably omits a comprehensive statistical analysis of interference, which could otherwise provide deeper insights into the dynamic interplay between power levels and interference patterns across the network.
The authors of \cite{6030792} derive a closed-form interference avoidance criterion based on a multi-frequency RTS/CTS scheme, overlooking the spatial correlation among transmitters \cite{10746553}.

Moreover, traditional interference models for CSMA protocols, such as those based on Matérn's hard-core process discussed in \cite{nguyen2007stochastic, haenggi2011mean, chen2024characterizing}, have been extensively used to analyze network behaviors. These models are appreciated for their theoretical simplicity and tractability. Particularly in \cite{nguyen2007stochastic}, the outage probability was derived using an approximation to the PPP, offering a manageable yet somewhat imprecise methodology for understanding network resilience. Conversely, \cite{haenggi2011mean} advanced this by deriving more accurate expressions for the mean interference within the Matérn hard-core framework, enhancing the model’s utility in practical applications. 
The process has been utilized as a fundamental model to emulate a variety of spatial configurations and scenarios, particularly where entities exhibit a minimum exclusion distance from one another \cite{9266343,9831248,9580432,10319691}.
But the intensity underestimation flaw of Matérn's hard-core process Type \Rmnum{2} has not been addressed in the literature, mentioned in \cite{Busson_Chelius_Gorce_2009} and \cite{6364571}. 
In \cite{6461033}, the authors introduced a modified hard-core point process inspired by the definition of Matérn's hard-core process Type \Rmnum{3} to mitigate the flaw.
Building upon the novel framework presented \cite{6461033}, expressions for outage probability and throughput were derived at the user pair served using NOMA \cite{9531481}.
The modified Matérn hard-core point process Type \Rmnum{3} model, which is an improved simulation method based on Matérn hard-core point process Type \Rmnum{3}, was discussed in order to resolve the inaccurately reflect of modified hard-core point process in \cite{8191026}. 

However, both approaches encounter significant limitations when applied to analyzing networks utilizing the RTS/CTS mechanism. This mechanism introduces additional complexity due to the spatial dependencies among transmitters and their corresponding receivers. In such configurations, the receivers' locations are marked at transmitters and become integral in suppressing potential interference from neighboring transmissions. This intricacy renders traditional tools like the reduced second moment measure and the Ripley K-function, which \cite{haenggi2011mean} employed, less effective. These methods, while robust under simpler point process assumptions, struggle to capture the impact of RTS/CTS protocols on network performance due to their inability to fully integrate the spatial correlation introduced by RTS/CTS interactions.

Recognizing these gaps, our work aims to extend the existing methodologies by incorporating a marked point process approach that explicitly considers the spatial configuration of both transmitters and receivers influenced by RTS/CTS mechanisms. This adaptation allows for a more sophisticated analysis of interference distributions and provides a framework that can more accurately reflect the operational realities of modern WLANs equipped with RTS/CTS protocols. This leads to a more detailed understanding of how spatial arrangements and protocol specifications affect overall network performance, offering valuable insights for network design and optimization.

\subsection{Contributions}
This study proposes an analytical framework for wireless networks based on the RTS/CTS handshake mechanism. Addressing identified gaps in existing research, we introduce novel tools and methodologies that enhance the analysis and optimization of WLAN performance. The key findings and contributions of our study are summarized as follows:

\begin{itemize}
    \item We introduce the dual-zone hard-core process (DZHCP) as a new stochastic model developed for networks utilizing RTS/CTS protocols. This model distinctively captures dual protection regions around transmitters and receivers, enabling a fine-grained analysis of spatial interactions and interference dynamics, thereby enhancing realism in network modeling and interference management.

    \item We have adapted traditional statistical measures, such as the reduced second moment measure and the Ripley K-function, to fit the marked point process models used in our analysis. This adaptation allows these important metrics to effectively analyze complex spatial interactions enabled by the RTS/CTS mechanism, providing a robust framework for characterizing spatial phenomena in wireless networks.

    \item We provide exact expressions for the mean interference based on our new models and measures. These formulas account for the unique spatial dependencies introduced by the RTS/CTS mechanism.

    \item We further contribute by deriving an approximation for the success probability of wireless links under the RTS/CTS mechanism. This approximation, based on asymptotic gain derivations, provides insights into the functioning of real-world networks, enhancing our ability to evaluate and predict network performance under practical operating conditions.
\end{itemize}

The remainder of this paper is organized as follows: Section \ref{sec:model} outlines our marked point process models for nodes under the RTS/CTS protocols, establishing the theoretical framework. Section \ref{sec:analysis} presents our main analytical results, including the intensity of the marked point process and the mean interference. Section \ref{sec:succeprob} presents our approximation for the success probability for the proposed DZHCP. Section \ref{sec:numerical} provides numerical illustrations and demonstrates their practical applications in WLAN environments.

\section{Statistical Model for RTS/CTS Networks}
\label{sec:model}

\subsection{Model Assumptions}
\label{subsec:basic}
We consider a densely populated geographic region with potential transmitter-receiver pairs operating within a shared frequency band, specifically focusing on a single subchannel of an 802.11 WLAN. Subchannels are assumed to be randomly and independently allocated, making this model applicable to multi-channel systems.

Each transmitter broadcasts at a common power level, \(P_t\). Channel fading follows a Rayleigh model, with fading coefficients that remain constant within each time slot and are independent across time and space, modeled as exponentially distributed with unit mean. The fading between a transmitter at point \(x\) and a receiver at point \(z\) is represented by \(h_{xz}\).
For path loss, we use a deterministic function \(l(\cdot)\), commonly \(l(r) = A r^{-\alpha}\) where \(\alpha > 2\) is the path-loss exponent. Under this model, the received power at the receiver at \(z\) from a transmitter at \(x\) is \(h_{xz}l(|x - z|)\).

\subsection{Dual-Zone Hard-Core Process Model}
\label{subsec:ctp}

We model the locations of transmitters and receivers using a Poisson bipolar model \cite{baccelli2009stochastic}, where each point in a PPP represents a potential transmitter, paired with a receiver located at a fixed distance \(d\) and a random angle \(\theta\) (see Figure \ref{fig:bipolar}). Although \(d\) is fixed in this basic model, it can be extended by allowing \(d\) to vary as a random variable.

\begin{figure}
\centering
\includegraphics[width=0.4\textwidth]{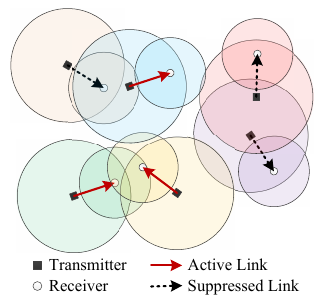}
\caption{Illustration of a bipolar model with RTS/CTS signaling and Type \Rmnum{1} thinning. In this model, each transmitter (represented as a black square) is paired with a corresponding receiver (depicted as a white circle). Active links are shown with solid red arrows, representing the connections that remain functional after the RTS/CTS thinning process. Suppressed links, depicted by dashed black arrows, represent connections that are blocked or ``thinned'' to prevent interference with neighboring active links, in accordance with the RTS/CTS mechanism.}
\label{fig:bipolar}
\end{figure}

Incorporating the RTS/CTS mechanism, each transceiver pair establishes an exclusion region consisting of two regions: a physical carrier sensing region and a virtual carrier sensing region. The physical region is a disk centered on the transmitter with radius \(R_\mathrm{cs}\), while the virtual region includes overlapping disks centered on both the transmitter and receiver, each with radius \(R_\mathrm{tx}\), where \(R_\mathrm{tx} < R_\mathrm{cs}\). 

This dual-zone configuration simulates the protective effect of the RTS/CTS handshake mechanism, using the average received power (ignoring fading) to simplify exclusion region calculations. This approach parallels assumptions used in modeling the CSMA protocol with Matérn hard-core processes. Figure \ref{fig:Pair} illustrates these overlapping regions, highlighting the spatial dynamics governed by the RTS/CTS protocol within our model.

In our network model, we capture the effect of the RTS/CTS handshake mechanism through a process we term \emph{RTS/CTS thinning}. This process selectively filters points from the Poisson bipolar process of potential transceivers, following specific RTS/CTS rules to adapt the network layout and reduce interference. We introduce two types of RTS/CTS thinning, inspired by the Matérn hard-core model but customized to the specifics of wireless protocols using RTS/CTS.

In \emph{Type \Rmnum{1} thinning}, a transceiver pair is retained only if no other potential transmitter exists within its exclusion region. This region is defined by the combined physical and virtual carrier sensing regions. Type \Rmnum{1} thinning models a scenario similar to slotted-time protocols, where each time slot is isolated, and transceivers compete for access within each slot. If a potential collision is detected (i.e., another transmitter is found within the exclusion region), the transceiver pair’s transmission is deferred to avoid interference in the current slot.

\emph{Type \Rmnum{2} thinning} introduces a more dynamic approach by assigning each transceiver pair a random mark, typically viewed as a timestamp. A transceiver pair is retained for transmission only if no other potential transmitter with an earlier timestamp (i.e., a lower mark) is within its exclusion region. This approach more closely represents real-world transmission patterns, where transmissions depend on the relative timing of signals rather than strict time slots. Type \Rmnum{2} thinning thus manages interference by prioritizing established transmissions over newer attempts, creating a transmission landscape that dynamically adjusts to timing variations.

The distinction between these thinning types reflects different strategies for managing network traffic and interference. \emph{Type \Rmnum{1} thinning} provides a straightforward, conservative approach to minimizing signal interference, which can come at the cost of reduced network throughput. By contrast, \emph{Type \Rmnum{2} thinning} introduces greater flexibility by prioritizing transceivers based on their initiation times, potentially enhancing network efficiency. In practical applications, Type \Rmnum{2} thinning aligns more closely with the behavior of modern RTS/CTS protocols, balancing robust interference management with more efficient network utilization.

The proposed dual-zone hard-core processes share some similarities with Matérn hard-core processes—particularly in their generation and thinning methods—but also introduce key differences in the configuration of exclusion regions around each transceiver pair. In a Type \Rmnum{1} Matérn hard-core process, exclusion regions are regular circles with a fixed radius for all transmitters. This symmetry means that if one transmitter falls within the exclusion region of another, the reverse is also true, causing both transmitters to either be retained or removed together.

In contrast, the exclusion regions in a Type \Rmnum{1} dual-zone hard-core process are irregular, resulting in asymmetrical interactions. If one transmitter lies within the exclusion region of another, it does not necessarily mean that the second transmitter will be within the exclusion region of the first. This irregularity allows for scenarios where, under Type \Rmnum{1} thinning in the dual-zone hard-core process, one transceiver pair may be suppressed while the other remains active. This unique configuration introduces a finer control over interference management, enabling more nuanced adjustments based on the spatial relationships between transceivers.

It is worth noting that the dual-zone hard-core model may be particularly suitable for full-duplex communication systems, where simultaneous transmission and reception occur on the same frequency channel. In full-duplex systems, interference management becomes critical, as nodes must carefully handle both self-interference and interference from neighboring nodes \cite{7182305}. The dual exclusion regions—accounting for both physical and virtual carrier sensing—offer a natural framework to model these complex interference patterns, making this model an advantageous choice for analyzing full-duplex networks.
This layered approach to interference exclusion reflects the practical requirements of full-duplex networks more accurately than single-exclusion models.

\subsection{Mathematical Description}
\label{subsec:math-model}

\subsubsection{Type \Rmnum{1} Dual-Zone Hard-Core Process}
The Type \Rmnum{1} dual-zone hard-core process leverages a dependently marked PPP, denoted as $\widetilde{\Phi}_{\rm PPP}$, to represent the spatial distribution and interaction rules for transceivers operating under the RTS/CTS mechanism. This process, defined over the Euclidean plane $\mathbb{R}^2$, uses an intensity $\lambda_p$ to capture the density of potential transmitters and is structured as follows:

\begin{itemize}
    \item {Transmitter locations $\Phi_{\rm PPP}$:} Represents the set of potential transmitters in the network modeled by a PPP, denoted by $\Phi_{\rm PPP} = \{x_i\}$.
    
    \item {Receiver orientation $\theta_i$:} Each transmitter located at $x_i$ has an associated receiver placed at a fixed distance $d$ in a randomly chosen direction. The orientation of each receiver relative to its transmitter is denoted by $\theta_i$, which is uniformly distributed in $[0, 2\pi]$. Together, the pair $(x_i, \theta_i)$ uniquely determines each receiver's location.
    
    \item {Medium access indicator $e_i$:} Each transmitter-receiver pair $(x_i, \theta_i)$ is associated with an access indicator $e_i$, which determines whether the pair is allowed to transmit, based on local network conditions.
\end{itemize}

For a given transmitter-receiver pair $(x_i, \theta_i, e_i)$ in $\widetilde{\Phi}a$, we define its neighborhood $\mathcal{N}(x_i, \theta_i, e_i)$ as
\begin{align}
\mathcal{N}&(x_i,\theta_i,e_i)\triangleq\{(x_j,\theta_j,e_j)\in\widetilde{\Phi}_{\rm PPP}: \nonumber\\
&x_j\in B_{x_i}(R_{\mathrm{cs}})\cup B_{x_i+(d\cos\theta_i,d\sin\theta_i)}(R_{\mathrm{tx}}),j\neq i\},
\end{align}
where $B_x(r)$ denotes a disk of radius $r$ centered at $x$. This neighborhood function $\mathcal{N}$ includes all potential transmitters that fall within the exclusion regions of the transmitter-receiver pair $(x_i, \theta_i, e_i)$, defined by the physical carrier sensing radius $R_{\mathrm{cs}}$ and the virtual carrier sensing radius $R_{\mathrm{tx}}$.

The medium access indicator $e_i$ for a transmitter-receiver pair $(x_i, \theta_i)$ is then determined as
\begin{equation} 
e_i \triangleq \mathds{1}\left(\#\mathcal{N}(x_i, \theta_i, e_i) = 0\right), 
\end{equation} 
where $\#$ denotes the count of elements in the neighborhood. This indicator ensures that a transmitter-receiver pair is allowed to transmit only if no other transmitters are present within its combined exclusion region. In other words, $e_i = 1$ if the exclusion region is empty, allowing transmission, and $e_i = 0$ otherwise.

The collection of all transceiver pairs that successfully meet the conditions of the RTS/CTS thinning process is defined as
\begin{equation}
\widetilde{\Phi} \triangleq \{(x, \theta) : (x, \theta, e) \in \widetilde{\Phi}_{\rm PPP}, \; e = 1\},
\end{equation}
where only pairs with \( e = 1 \) are retained, indicating they have passed the medium access check.
The final set of active transmitters, denoted by \(\Phi\), is
\begin{equation}
\Phi \triangleq \{x : (x, \theta) \in \widetilde{\Phi}\}.
\end{equation}

This filtered set \(\Phi\) represents the transmitters that have successfully passed the RTS/CTS thinning protocol. By restricting the set to these approved transmitters, the model effectively reduces potential interference within the network, allowing for more efficient and reliable communications.

\subsubsection{Type \Rmnum{2} Dual-Zone Hard-Core Process}
In contrast to Type \Rmnum{1}, the Type \Rmnum{2} dual-zone hard-core process incorporates the asynchronous nature of transmission attempts, modeling a more realistic scenario where the RTS/CTS mechanism operates based on chronological transmission order. This process uses a dependently marked PPP, denoted as $\widetilde{\Phi}_{\rm PPP}$, over $\mathbb{R}^2$ with an intensity $\lambda_p$. The components of this process are as follows:

\begin{itemize}
    \item {Transmitter locations $\Phi_{\rm PPP}$:} The set of potential transmitters within the network modeled by a PPP, represented by $\Phi_{\rm PPP} = \{x_i\}$.
    
    \item {Receiver orientation $\theta_i$:} The orientation of each receiver relative to its transmitter at location $x_i$, independently and identically distributed (i.i.d.) over $[0, 2\pi]$. Combined with a fixed transmitter-receiver distance $d$, this orientation determines the exact location of each receiver.
    
    \item {Transmission initiation time marks $\{m_i\}$:} A set of i.i.d. time marks, uniformly distributed over $[0,1]$, representing the relative initiation times of transmissions for each transmitter. These marks are crucial for resolving access contention among transceivers.
    
    \item {Medium access indicator, $e_i$:} Indicates whether a transceiver pair is eligible for transmission based on local network conditions and the timing indicated by $m_i$.
\end{itemize}

For each node $(x_i,\theta_i,m_i,e_i)$ in $\widetilde{\Phi}_{\rm PPP}$, the neighboring set $\mathcal{N}(x_i,\theta_i,m_i,e_i)$ is defined as
\begin{align}
\mathcal{N}&(x_i,\theta_i,m_i,e_i) \triangleq \{(x_j,\theta_j,m_j,e_j) \in \widetilde{\Phi}_{\rm PPP}: \nonumber \\
&x_j \in B_{x_i}(R_{\mathrm{cs}}) \cup B_{x_i + (d\cos\theta_i, d\sin\theta_i)}(R_{\mathrm{tx}}),\ j \neq i\},
\end{align}
where $B_x(r)$ denotes a disk centered at $x$ with radius $r$. This defines the exclusion region for each node, which is influenced by both the physical carrier sensing distance $R_{\mathrm{cs}}$ and the virtual carrier sensing distance $R_{\mathrm{tx}}$.

The eligibility of a node $(x_i,\theta_i,m_i,e_i)$ to transmit, indicated by $e_i$, is
\begin{equation}
e_i \triangleq \mathds{1}\left(\forall (x_j,\theta_j,m_j,e_j) \in \mathcal{N}(x_i,\theta_i,m_i,e_i),\ m_i < m_j \right).
\end{equation}
This condition ensures that a node can transmit only if its time mark $m_i$ is earlier than any other potential transmitters within its exclusion regions.

Following the application of the Type \Rmnum{2} thinning based on these criteria, the set of active transceiver pairs, $\widetilde{\Phi}$, is
\begin{equation}
\widetilde{\Phi} \triangleq \{(x, \theta) : (x, \theta, m, e) \in \widetilde{\Phi}_{\rm PPP}, \ e = 1\}.
\end{equation}
This filtered set $\widetilde{\Phi}$ represents transceiver pairs that successfully passed the Type \Rmnum{2} thinning criteria.

Finally, the collection of active transmitters that survive the thinning process is
\begin{equation}
\Phi \triangleq \{x : (x, \theta) \in \widetilde{\Phi}\}.
\end{equation}
This process effectively captures the asynchronous and competitive dynamics of network access managed by the RTS/CTS mechanism, providing a more realistic, time-sensitive model of network behavior.

\section{Intensity and Mean Interference}
\label{sec:analysis}
In this section, we establish the node intensity and the mean interference experienced by a receiver. We introduce several key quantities that are crucial for our calculations:

\begin{itemize}
    \item {Exclusion region area $V_o$:} This is the area of the exclusion region around a transceiver pair, which is essential for calculating the density of active transmitters:
    \begin{equation}
    V_o = (\pi - \xi_1) R_{\mathrm{cs}}^2 + (\pi - \xi_2) R_{\mathrm{tx}}^2 + d R_{\mathrm{cs}} \sin \xi_1,
    \label{equ:V_o}
    \end{equation}
    where $\xi_1 = \arccos\left(\frac{d^2 + R_{\mathrm{cs}}^2 - R_{\mathrm{tx}}^2}{2 d R_{\mathrm{cs}}}\right)$ and $\xi_2 = \arccos\left(\frac{d^2 + R_{\mathrm{tx}}^2 - R_{\mathrm{cs}}^2}{2 d R_{\mathrm{tx}}}\right)$, as illustrated in Figure \ref{fig:Pair}.

    \item {Spatial criteria for interference $S_1$, $S_2$, and $S_3$:} Define the spatial criteria for transmitters within the exclusion regions, helping to determine interference conditions:
    \begin{itemize}
        \item $S_1 = \{(r, \beta, \theta) : r \leq R_{\mathrm{cs}}\}$,
        \item $S_2 = \{(r, \beta, \theta) : r^2 - 2 r d \cos \beta + d^2 \leq R_{\mathrm{tx}}^2\}$,
        \item $S_3 = \{(r, \beta, \theta) : r^2 + 2 r d \cos(\beta - \theta) + d^2 \leq R_{\mathrm{tx}}^2\}$.
    \end{itemize}
    These sets define how transmitters interact based on their relative distances and orientations, essential for modeling the RTS/CTS mechanism’s impact on network interference. 
    Consider a transceiver pair with its transmitter at the origin \( o \) and receiver at \( (d, 0) \), and another pair with its transmitter at \( (r \cos \beta, r \sin \beta) \) and receiver oriented at \( \theta \). The set \( S_1 \) includes \( (r, \beta, \theta) \) where the second pair's transmitter is within the physical carrier sensing region of the first pair. \( S_2 \) includes \( (r, \beta, \theta) \) where the second pair's transmitter is within the RTS/CTS region influenced by the first pair's receiver. \( S_3 \) covers \( (r, \beta, \theta) \) where the first pair's transmitter is within the RTS/CTS region created by the second pair's transmitter. The union \( S_1 \cup S_2 \cup S_3 \) indicates the conditions under which at least one transceiver pair will be suppressed from transmission.

    \item {Combined sensing area $V(r, \beta, \theta)$:} Describes the combined area of the physical and virtual carrier sensing regions affected by two transceiver pairs separated by distance $r$, with relative phase difference $\beta$, and orientations $0$ and $\theta$, respectively (see Figure \ref{fig:twopair}).
\end{itemize}

\begin{figure}
\centering
\includegraphics[width=0.4\textwidth]{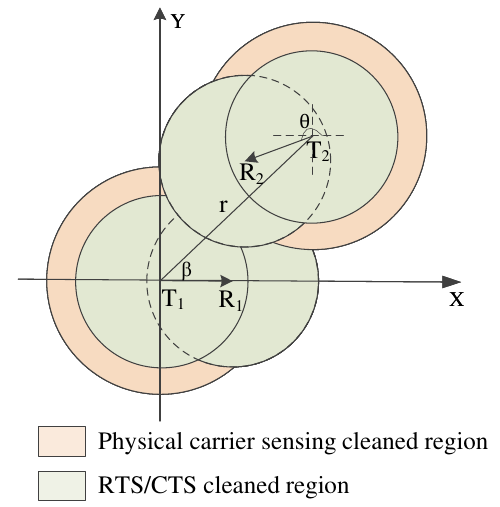}
\caption{Spatial distribution of two transceiver pairs showing their respective exclusion regions.}
\label{fig:twopair}
\end{figure}

These formulations enable us to rigorously calculate the density of active transmitters and the resultant mean interference, accounting for both the geometrical layout and the protective measures instituted by the RTS/CTS mechanism.




\subsection{Type \Rmnum{1} Dual-Zone Hard-Core Process}
\subsubsection{Node Intensity}

In analyzing the node intensity and the mean interference for the Type \Rmnum{1} dual-zone hard-core process, we use the concept of Palm measure. This measure is particularly useful for describing the statistical properties of a point process observed from a given point. For a stationary point process $\Psi$ with a finite, non-zero intensity $\lambda$, and defined on a measurable space $(\mathbb{M}, \mathcal{M})$ with $\mathbb{M}$ representing all possible simple point patterns and $\mathcal{M}$ a $\sigma-$algebra over $\mathbb{M}$, the Palm distribution can be expressed as \cite[Sec. 8.2.1]{haenggi2012stochastic}
\begin{eqnarray}
P_o(Y) = \frac{1}{\lambda v_d(B)} \int_\mathbb{M} \sum_{x \in \psi \cap B} \mathds{1}_Y(\psi_{-x}) P(\mathrm{d}\psi), \quad Y \in \mathcal{M}.
\end{eqnarray}
Here, $v_d$ is the Lebesgue measure, $B$ an arbitrary Borel set with positive measure, and $\psi_{-x} = \{y \in \psi : y - x\}$ denotes the translation of $\psi$ by $-x$.

Focusing on the Type \Rmnum{1} dual-zone hard-core process, where $\Phi$ is a dependently thinned version of the original PPP $\Phi_{\rm PPP}$, we consider the Palm measure $P_o^{(\Phi_{\rm PPP})}$ of $\Phi_{\rm PPP}$. The node intensity for this process is
\begin{eqnarray}
\lambda = \lambda_p P_o^{(\Phi_{\rm PPP})}(o \in \Phi) = \lambda_p e^{-\lambda_p V_o}, \label{equ:lambda_1}
\end{eqnarray}
where $V_o$ is the area of the RTS/CTS exclusion region, factoring in both physical and virtual carrier sensing ranges.

This expression suggests that the node intensity $\lambda$ of the thinned process equals the intensity $\lambda_p$ of the original PPP adjusted by the void probability within the exclusion region defined by $V_o$. This probability highlights how likely a space is free from other nodes, thereby not preventing the considered node's transmission. When $\lambda_p$ is relatively low, $\lambda$ increases linearly with $\lambda_p$. Conversely, as $\lambda_p$ increases beyond a certain threshold, $\lambda$ begins to decrease, underscoring the overly conservative nature of the Type \Rmnum{1} thinning in dense networks. The maximum efficiency, where $\lambda$ peaks, is achieved when $\lambda_p$ is optimally balanced against the exclusion region's area, which occurs at $\lambda_p = 1/V_o$.

\subsubsection{Mean Interference}
This subsection derives the mean interference in the Type \Rmnum{1} dual-zone hard-core process. Without loss of generality, we consider a scenario where the transmitter is located at the origin $o$ and the receiver is at $z_o=(d\cos\theta_o, d\sin\theta_o)$. We specifically condition our analysis on the event $(o,\theta_o) \in \widetilde{\Phi}$, ensuring that the point $(o,\theta_o)$, representing the signal source, does not contribute to the interference calculations. This exclusion is critical as the signal from the considered transceiver pair does not constitute interference to itself.

We employ the \emph{reduced} Palm measure, denoted as $P_{(o,\theta_o)}^!$, associated with the marked point process $\widetilde{\Phi}$ for our interference analysis. The expectation under this measure, denoted by $E_{(o,\theta_o)}^!$, helps calculate the expected interference $I_{z_o}$. Notably, the mean interference calculation is independent of $\theta_o$, allowing us to simplify our notation to $E_{(o,0)}^!(I_{z_o})$. Conditioning on $\theta_o$ provides necessary flexibility for deriving broader results applicable in varying geometrical configurations.

\begin{thm}
\label{thm:inter_1}
The mean interference $I_{z_o}$ that a receiver experiences in the Type \Rmnum{1} dual-zone hard-core process is
\begin{align}
E_{(o,0)}^!&(I_{z_o}) = \frac{\lambda_p^2 P_t}{2\pi \lambda} \int_0^\infty \int_0^{2\pi} \int_0^{2\pi} \nonumber \\
&l(\sqrt{r^2 - 2rd \cos\beta + d^2}) k(r, \beta, 0, \theta) r \mathrm{d}\theta \mathrm{d}\beta \mathrm{d}r, \label{equ:inter_final}
\end{align}
where the kernel function $k(r,\beta,0,\theta)$ is defined as:
\begin{equation}
k(r,\beta,0,\theta) = \left\{
\begin{array}{ll}
0 &\!\!\!\!\!\!\!\!\!\!\!\!\!\!\!\!\!\!\!\!\!\!\!\!\!\!\!\!\!\!\!\!\!\!\!\!\!\!\! \text{if } (r,\beta,\theta) \in S_1 \cup S_2 \cup S_3 \\
\exp(-\lambda_pV(r,\beta,\theta)) & \text{otherwise}.
\end{array}
\right.\label{equ:k_final}
\end{equation}
\end{thm}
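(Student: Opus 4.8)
The strategy is to express the mean interference as a reduced Palm expectation over the marked process $\widetilde{\Phi}$ and to evaluate that expectation through the second-order product density of $\widetilde{\Phi}$, which here plays the role of the reduced second moment measure adapted to the marked setting. First I would write the interference seen at the tagged receiver as $I_{z_o}=P_t\sum_{(x,\theta)\in\widetilde{\Phi}}h_{xz_o}\,l(|x-z_o|)$, where, by the stationarity and isotropy of the construction, we may fix $\theta_o=0$ so that $z_o=(d,0)$. Since the Rayleigh fading coefficients are independent of $\widetilde{\Phi}$ with unit mean, conditioning on $(o,0)\in\widetilde{\Phi}$ and taking the reduced Palm expectation yields $E_{(o,0)}^!(I_{z_o})=P_t\,E_{(o,0)}^!\big[\sum_{(x,\theta)\in\widetilde{\Phi}}l(|x-z_o|)\big]$. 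Applying the refined Campbell--Mecke theorem for marked point processes \cite{haenggi2012stochastic}, this reduced Palm sum equals an integral of $l(|x-z_o|)$ against the second-order product density $\rho^{(2)}\big((o,0),(x,\theta)\big)$ of $\widetilde{\Phi}$, taken over $x\in\mathbb{R}^2$ and over the mark $\theta\in[0,2\pi]$, and normalized by the intensity $\lambda$ of \eqref{equ:lambda_1}.

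The heart of the argument is to identify $\rho^{(2)}\big((o,0),(x,\theta)\big)$. Because $\widetilde{\Phi}$ is a position-measurable thinning of the marked PPP $\widetilde{\Phi}_{\rm PPP}$, and because Slivnyak's theorem tells us the reduced second-order Palm distribution of a PPP is again the same PPP with the two conditioning points inserted, the product density factors as $\rho^{(2)}\big((o,0),(x,\theta)\big)=\big(\tfrac{\lambda_p}{2\pi}\big)^2\, q\big((o,0),(x,\theta)\big)$, where $q$ is the probability that the two transceiver pairs located at $(o,0)$ and $(x,\theta)$ both survive Type \Rmnum{1} thinning when the remaining potential transmitters form an independent PPP of intensity $\lambda_p$. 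Writing $x=(r\cos\beta,r\sin\beta)$, the pair at $(o,0)$ has exclusion region $E_1=B_o(R_{\mathrm{cs}})\cup B_{(d,0)}(R_{\mathrm{tx}})$ and the pair at $(x,\theta)$ has exclusion region $E_2=B_x(R_{\mathrm{cs}})\cup B_{x+(d\cos\theta,d\sin\theta)}(R_{\mathrm{tx}})$. I would then argue that $q$ equals the kernel $k$ of \eqref{equ:k_final} by a dichotomy: using the distance identities behind $S_1,S_2,S_3$, the event that $x\in E_1$ or $o\in E_2$ is precisely $(r,\beta,\theta)\in S_1\cup S_2\cup S_3$, and on that event at least one access indicator is forced to zero, so $q=0$; off that event neither tagged point suppresses the other, so both survive if and only if no point of the remaining PPP falls in $E_1\cup E_2$, an event of probability $e^{-\lambda_p|E_1\cup E_2|}=e^{-\lambda_p V(r,\beta,\theta)}$ by the definition of $V$. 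This reproduces \eqref{equ:k_final} exactly.

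Finally I would assemble the pieces: substituting $\rho^{(2)}=(\lambda_p/2\pi)^2 k$ into the Campbell--Mecke integral and dividing by the first-order density $\lambda/(2\pi)$ of $\widetilde{\Phi}$ (with respect to Lebesgue measure on $\mathbb{R}^2\times[0,2\pi]$) gives the prefactor $\frac{\lambda_p^2 P_t}{2\pi\lambda}$; passing from $x$ to polar coordinates $(r,\beta)$ contributes the Jacobian $r$ and turns $|x-z_o|$ into $\sqrt{r^2-2rd\cos\beta+d^2}$, which yields \eqref{equ:inter_final}.

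I expect the main obstacle to be the Palm-calculus bookkeeping for a process that is simultaneously marked and dependently thinned: one must make precise that the reduced second-order Palm distribution of $\widetilde{\Phi}_{\rm PPP}$ amounts to the underlying PPP augmented by the two deterministic conditioning points, so that the two retention events decouple into a single void probability over $E_1\cup E_2$; and one must verify the purely geometric claims that $S_1\cup S_2\cup S_3$ is exactly the mutual/one-way suppression region and that $|E_1\cup E_2|=V(r,\beta,\theta)$, with the consistency check $|E_1|=V_o$ of \eqref{equ:V_o} via the two-disk lens-area formula. Lining up the $2\pi$ normalization constants so that the single-pair specialization collapses to \eqref{equ:lambda_1} is the remaining routine but error-prone step.
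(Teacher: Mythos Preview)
Your proposal is correct and follows essentially the same route as the paper: express the reduced Palm expectation via the second-order product density of the marked, dependently thinned process, identify that density as $(\lambda_p/2\pi)^2$ times the two-point retention probability $k$, evaluate $k$ by the $S_1\cup S_2\cup S_3$ dichotomy and the PPP void probability on $E_1\cup E_2$, and finish with polar coordinates. The only cosmetic difference is that the paper routes the Palm bookkeeping through an intermediate reduced second-order factorial measure $\mathcal{K}_{\theta_o}$ before specializing to $\varrho^{(2)}$, whereas you invoke Campbell--Mecke and Slivnyak directly; the content is the same.
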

\begin{proof}
See Appendix \ref{appendix:A}.
\end{proof}

This theorem encapsulates the core analytical challenge of calculating the interference by integrating over the spatial variables and taking into account the non-uniform distribution of potential interfering transmitters induced by the dual-zone exclusion mechanism. The sets $S_1, S_2,$ and $S_3$ delineate regions where interfering transmitters are either within the physical sensing or the RTS/CTS cleaned regions, thus significantly affecting the resultant interference patterns. By incorporating these spatial dependencies, we offer a comprehensive model that accurately reflects the complexities introduced by the RTS/CTS mechanism in managing interference.

\subsection{Type \Rmnum{2} Dual-Zone Hard-Core Process}
\subsubsection{Node Intensity}
In the Type \Rmnum{2} dual-zone hard-core process, each transceiver pair is marked with a time stamp $t$, which significantly influences the retention of transmitters within the process. For a given transmitter at a specific location, those transmitters within its induced guard region and having a time stamp greater than $t$ are removed, simulating a priority system based on the time of initiation.

This mechanism leads to a situation where the retention probability of a transceiver pair is dependent on its time stamp in relation to others within its vicinity. Specifically, the probability that a given transceiver pair with time stamp $t$ is retained is $e^{-\lambda_p t V_o}$, reflecting the exponential thinning of the process based on the area of the guard region $V_o$ and the density of transmitters $\lambda_p$.

The overall intensity $\lambda$ of the Type \Rmnum{2} process is then derived by integrating this retention probability over all possible values of $t$ from $0$ to $1$, representing a uniform distribution of time stamps:
\begin{align}
\lambda&=\lambda_pP_o^{(\Phi_{\rm PPP})}\left(o\in{\Phi}\right)\nonumber\\
&=\lambda_p\int_0^1P\big((o,\theta_o)\in\widetilde{\Phi}|(o,\theta_o,m_o)\in\widetilde{\Phi}_o,m_o=t\big)\mathrm{d}t\nonumber\\
&=\lambda_p\int_0^1e^{-\lambda_ptV_o}\mathrm{d}t \nonumber \\
&= \frac{1}{V_o}(1-e^{-\lambda_pV_o}), \label{equ:lambda_2}
\end{align}
where the integral simplifies to the expression seen in the final line. Here, $V_o$ is defined as before in (\ref{equ:V_o}) and represents the combined area of both the RTS and CTS regions influencing the potential for a transmitter to be active.

This resulting intensity $\lambda$ provides insight into how the system behaves as $\lambda_p$ increases. Notably, $\lambda$ increases monotonically with $\lambda_p$ and asymptotically approaches the limit $\frac{1}{V_o}$ as $\lambda_p \rightarrow \infty$. This limit signifies that the maximum possible density of active transmitters is inversely proportional to the area of the exclusion region, regardless of the initial density $\lambda_p$ of potential transmitters. Thus, in highly dense environments, the process effectively becomes limited by the spatial constraints imposed by $V_o$, rather than the initial density of transmitters.

\subsubsection{Mean Interference}
Building upon the framework used in the Type \Rmnum{1} process, we derive the mean interference experienced by the typical receiver in the Type \Rmnum{2} dual-zone hard-core process. This process accounts for time stamps that influence transmitter retention, providing a dynamic approach to managing interference.
\begin{thm}
\label{thm:inter_2}
The mean interference $I_{z_o}$ at the typical receiver located at $z_o$ in the Type \Rmnum{2} process is given by
\begin{align}
E_{(o,0)}^!&(I_{z_o})  \frac{\lambda_p^2P_t}{2\pi\lambda}\int_0^\infty\int_0^{2\pi}\int_0^{2\pi} \nonumber \\
&l(\sqrt{r^2-2rd\cos\beta+d^2})k(r,\beta,0,\theta)r\mathrm{d}\theta\mathrm{d}\beta\mathrm{d}r, \label{equ:inter_final2}
\end{align}
where
\begin{equation}
k(r,\beta,0,\theta) = \left\{\begin{array}{ll}
0 & (r,\beta,\theta)\in S_1\bigcup (S_2\bigcap S_3) \\
2\eta(V) &  (r,\beta,\theta)\in \overline{S_1}\bigcap\overline{S_2}\bigcap\overline{S_3}  \\
\eta(V) & \mathrm{otherwise},
\end{array} \right.\label{equ:k_final2}
\end{equation}
in which $V=V(r,\beta,\theta)$, and
\begin{eqnarray}
\eta(V) = \frac{V_oe^{-\lambda_pV}-Ve^{-\lambda_pV_o}+V-V_o}{\lambda_p^2(V-V_o)VV_o}.
\end{eqnarray}
\end{thm}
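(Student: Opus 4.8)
The plan is to follow the same route as the Type~\Rmnum{1} proof in Appendix~\ref{appendix:A}, but now carrying the timestamp marks through every step. First I would write the interference as a sum over the active interferers and apply the reduced Campbell--Mecke formula for the marked PPP $\widetilde{\Phi}_{\rm PPP}$, conditioning on the typical pair $(o,0,m_o)$ being retained. Because we are working under the reduced Palm measure of the \emph{thinned} process $\widetilde{\Phi}$ but computing an expectation as if the points lived in the underlying PPP, the prefactor $\lambda_p^2/\lambda$ and the $1/(2\pi)$ from averaging over the interferer's orientation $\theta$ appear exactly as in Theorem~\ref{thm:inter_1}; I would cite that computation rather than redo it. After this step the problem reduces to evaluating, for a generic interferer at $(r\cos\beta,r\sin\beta)$ with orientation $\theta$ and timestamp $m$, the joint probability that (a) the typical pair survives given $m_o=t$, (b) the interferer survives given its mark $m$, and (c) there is no third point in the union of the two exclusion regions that would kill either one. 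This is what the kernel $k(r,\beta,0,\theta)$ must capture, and it is where Type~\Rmnum{2} genuinely differs from Type~\Rmnum{1}.

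The heart of the argument is a case analysis on the relative position $(r,\beta,\theta)$, driven by the asymmetry of the dual-zone exclusion region noted in Section~\ref{subsec:ctp}. I would split according to whether the interferer's transmitter lies in the typical pair's carrier-sensing/CTS region (the sets $S_1,S_2$) and whether the typical transmitter lies in the interferer's CTS region ($S_3$). If $(r,\beta,\theta)\in S_1\cup(S_2\cap S_3)$ then the two pairs mutually exclude each other in a way that no assignment of timestamps can resolve --- whichever has the smaller mark kills the other --- so $k=0$. Otherwise the exclusion is one-directional or absent, and both pairs can coexist provided their marks are small enough relative to any third point falling in the relevant part of $V(r,\beta,\theta)$. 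Here I would condition on $m_o=s$ and $m=u$, demand $s,u$ smaller than all marks of points in the shared exclusion region (a PPP of intensity $\lambda_p$ over a region of area $V$), which contributes $e^{-\lambda_p\max(s,u)V}$ after accounting for the overlap, and then integrate $s$ over $[0,1]$ (the conditioning on the typical mark) and $u$ over $[0,1]$ (the interferer's mark is free). The sub-case $(r,\beta,\theta)\in\overline{S_1}\cap\overline{S_2}\cap\overline{S_3}$ is the fully symmetric "no forced exclusion" case, where a standard double integral $\int_0^1\int_0^1 e^{-\lambda_p\max(s,u)V}\,\mathrm{d}u\,\mathrm{d}s$ plus the correction from the non-overlapping pieces of the two exclusion regions gives $2\eta(V)$; the remaining one-directional cases give the single $\eta(V)$. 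I would verify by direct integration that
\begin{equation}
\int_0^1\!\!\int_0^1 e^{-\lambda_p\max(s,u)V}\,\mathrm{d}u\,\mathrm{d}s
=\frac{2}{\lambda_p V}\Bigl(\frac{1}{\lambda_p V}(1-e^{-\lambda_p V})-e^{-\lambda_p V}\Bigr),
\end{equation}
and then show that folding in the factors $e^{-\lambda_p(V-V_o)\,s}$ etc.\ coming from the parts of each pair's own exclusion region not shared with the other yields precisely
\begin{equation}
\eta(V)=\frac{V_o e^{-\lambda_p V}-V e^{-\lambda_p V_o}+V-V_o}{\lambda_p^2 (V-V_o)V V_o}.
\nonumber
\end{equation}

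The main obstacle I anticipate is the bookkeeping of overlapping regions when combining the survival conditions of the two pairs. Each pair's retention depends on emptiness (below its own mark) of its \emph{own} $V_o$-area exclusion region, but the two regions overlap in an area $2V_o-V$, so the marks of points in the overlap are constrained by $\min(s,u)$ while points in the two private lobes are constrained by $s$ and $u$ separately; getting the exponents right is exactly what produces the three distinct terms $V_o e^{-\lambda_p V}$, $V e^{-\lambda_p V_o}$, and $V-V_o$ in $\eta$. A secondary subtlety is confirming that the boundary of the case split is correctly $S_1\cup(S_2\cap S_3)$ rather than $S_1\cup S_2\cup S_3$ as in Theorem~\ref{thm:inter_1}: under Type~\Rmnum{1} any one-directional inclusion already forces a removal, but under Type~\Rmnum{2} a one-directional inclusion ($S_2\setminus S_3$ or $S_3\setminus S_2$) only removes the pair whose mark is larger, so such configurations still contribute, which is why they land in the "$\eta(V)$ otherwise" branch. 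Once the region geometry and the max-of-two-uniforms integral are pinned down, the rest is substitution into the reduced Campbell--Mecke expression and collecting terms, mirroring Appendix~\ref{appendix:A}.
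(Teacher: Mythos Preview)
Your plan is the paper's: reuse the Type~\Rmnum{1} derivation to reduce everything to the two-point kernel $k(r,\beta,0,\theta)$, then evaluate $k$ by the case split on $S_1,S_2,S_3$ and a double integral over the two timestamps. The one-directional reasoning and the identification of $S_1\cup(S_2\cap S_3)$ as the zero case are exactly what the paper does.

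One correction to your bookkeeping: in the overlap region the marks of third points must exceed $\max(s,u)$, not $\min(s,u)$---each pair requires emptiness of its own region below its \emph{own} mark, so on the shared piece both constraints bind and the stronger one (the larger mark) governs. For the same reason the integrand in the symmetric case is not $e^{-\lambda_p\max(s,u)V}$ with an additive patch, but the three-factor product
\[
e^{-\lambda_p\max(s,u)(2V_o-V)}\;e^{-\lambda_p s(V-V_o)}\;e^{-\lambda_p u(V-V_o)}.
\]
Ordering $s>u$ collapses this to $e^{-\lambda_p sV_o}\,e^{-\lambda_p u(V-V_o)}$, which is precisely the paper's integrand in (\ref{eqn:k-two-conditions}); integrating over $0\le u\le s\le1$ and doubling by symmetry yields $2\eta(V)$. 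The one-directional ``otherwise'' cases then follow by restricting the same double integral to the half on which the ordering forced by the one-sided inclusion holds, giving a single $\eta(V)$, as you anticipated.
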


\begin{proof}
See Appendix \ref{appendix:B}.
\end{proof}

In the theorem, the function $\eta(V)$ accounts for the probability modifications due to the presence of other potential transmitters within these regions, considering both the original and the exclusion region effects. 
The complexity in $k(r, \beta, 0, \theta)$ arises from the necessity to consider overlapping guard regions, where multiple transmitters might influence the retention of a given pair. The conditional structure $\overline{S_1} \cap \overline{S_2} \cap \overline{S_3}$ indicates configurations where neither transmitter lies within the immediate guard region of the other, requiring an adjusted probability function $2\eta(V)$ to account for the doubled effect when both are likely to be retained. The otherwise clause captures all other configurations, applying a single $\eta(V)$ adjustment.

For the case where \(R_{\rm tx} + d < R_{\rm cs}\), meaning the RTS/CTS cleaned region is contained within the physical carrier sensing cleaned region, we have \(S_2 \subset S_1\) and \(S_3 \subset S_1\). In this scenario, the dual-zone hard-core process simplifies to the Matérn hard-core process. Consequently, the function \(k(r,\beta,0,\theta)\) aligns with the results presented in \cite{haenggi2011mean} for both Type \Rmnum{1} and Type \Rmnum{2} Matérn hard-core processes.

\section{Success Probability}
\label{sec:succeprob}
Calculating the success probability \( p_s(\gamma_o) \) of transmissions within spatial point process models requires the probability generating functional (PGFL). However, it is most likely impossible to find an analytical expression for the PGFL of the complex dual-zone hard-core process, which prevents an exact calculation of the success probability.

To address this limitation, prior research has introduced an alternative approach that uses an asymptotic gain to approximate the success probability in cellular networks relative to that of a PPP \cite{6897962, 7322270, 8648502, 9770939}. It was demonstrated that the proposed approximation approach is effective for cellular models, where the intended transmitter is the closest point of a point process. However, its application to bipolar models has not been explored. The interference characteristics in bipolar models, especially under a hard-core constraint, differ substantially due to their more structured point process. 
In this work, we extend the applicability of this approximation method for the first time to the bipolar dual-zone hard-core model, where the distance to the associated transmitter is fixed. 

The proposed method involves first determining the asymptotic gain \( G \) for the point process model \(\phi\). The success probability for \(\phi\), denoted by \( P_\phi(T) \), is then approximated using the success probability of a PPP model with the nearest transmitter association, \( P_{\rm PPP}(T) \), adjusted by \( G \):
\begin{equation}
P_\phi(T) \approx P_{\rm PPP}\left(\frac{T}{G}\right).
\end{equation}

The homogeneous independent Poisson (HIP) model with the nearest transmitter association is used as a reference for deriving \( P_{\rm PPP}(T) \) due to its analytical tractability and simplicity. The HIP model assumes no intra-tier or inter-tier dependencies, and its signal-to-interference ratio (SIR) distribution is independent of node density or transmit power levels. This invariance allows the use of a single-tier framework to compute the Mean Interference-to-Signal Ratio (MISR), which is central to deriving the asymptotic gain.

The asymptotic gain \( G \), as defined in \cite{6897962}, is a measure of the relative SIR characteristics of a PPP and the point process model \(\phi\). It is expressed as the ratio of the MISR of a PPP to that of \(\phi\):
\begin{equation}
G = \frac{{\rm MISR}_{\rm PPP}}{{\rm MISR}_\phi}. \label{eqn:asygain}
\end{equation}

Here, \({\rm MISR}_\phi\) quantifies the average interference relative to the received signal strength in the point process model \(\phi\). For the HIP model with the nearest transmitter association, the MISR is \cite{6897962}
\begin{equation}
{\rm MISR}_{\rm PPP} = \frac{2}{\alpha - 2}.
\end{equation}

We will next provide a detailed exposition on deriving the asymptotic gain \( G \) for both Type \Rmnum{1} and Type \Rmnum{2} dual-zone hard-core processes.

\begin{lem}
\label{lem:asympgain}
The asymptotic gain for Type \Rmnum{1} and Type \Rmnum{2} of the dual-zone hard-core processes is given by
\begin{align}
&G_{\rm DZHCP} = \frac{4\pi\lambda A}{(\alpha-2)\lambda_p^2r_0^\alpha}\bigg(\int_0^\infty\int_0^{2\pi}\int_0^{2\pi} \nonumber\\
&\quad l(\sqrt{r^2-2rd\cos\beta+d^2})k(r,\beta,0,\theta)r\mathrm{d}\theta\mathrm{d}\beta\mathrm{d}r\bigg)^{-1}, \label{eqn:GCTP}
\end{align}
where \( \lambda_p \) is the intensity of the underlying PPP, \(k(r,\beta,0,\theta)\) is given by (\ref{equ:k_final}) for Type \Rmnum{1} of the dual-zone hard-core process and given by (\ref{equ:k_final2})
for Type \Rmnum{2} of the dual-zone hard-core process.
\end{lem}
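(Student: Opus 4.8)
The plan is to read $G_{\rm DZHCP}$ off its definition \eqref{eqn:asygain}, $G = {\rm MISR}_{\rm PPP}/{\rm MISR}_\phi$, by computing the mean interference-to-signal ratio ${\rm MISR}_\phi$ of the dual-zone hard-core process and dividing the known reference value ${\rm MISR}_{\rm PPP}=2/(\alpha-2)$ by it. The one ingredient specific to the bipolar setting is the meaning of the ``signal'' in the MISR: since the typical receiver's intended transmitter sits at the \emph{deterministic} distance $r_0$ (the link distance, denoted $d$ in Section~\ref{sec:model}), the signal power is the mean received power $P_t\,l(r_0)=P_t A r_0^{-\alpha}$, the unit-mean Rayleigh fading averaging out of the ratio exactly as in the cellular derivation of \cite{6897962}. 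Everything else is already supplied by Theorems~\ref{thm:inter_1} and~\ref{thm:inter_2}.

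Concretely, first I would write, for both Type~\Rmnum{1} and Type~\Rmnum{2},
\[
{\rm MISR}_\phi \;=\; \frac{E_{(o,0)}^!(I_{z_o})}{P_t\,l(r_0)} \;=\; \frac{r_0^{\alpha}}{P_t A}\,E_{(o,0)}^!(I_{z_o}),
\]
the expectation being taken under the reduced Palm measure conditioned on the typical pair being retained, so that the typical transmitter does not count toward $I_{z_o}$. I would then substitute the closed form of $E_{(o,0)}^!(I_{z_o})$ --- expression \eqref{equ:inter_final} with kernel \eqref{equ:k_final} for Type~\Rmnum{1}, and \eqref{equ:inter_final2} with kernel \eqref{equ:k_final2} for Type~\Rmnum{2}. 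In both cases $P_t$ cancels and the common prefactor $\lambda_p^2/(2\pi\lambda)$ factors out, leaving
\begin{align*}
{\rm MISR}_\phi &= \frac{\lambda_p^2 r_0^{\alpha}}{2\pi\lambda A}\int_0^\infty\!\!\int_0^{2\pi}\!\!\int_0^{2\pi} \\
&\qquad l\!\left(\sqrt{r^2-2rd\cos\beta+d^2}\right) k(r,\beta,0,\theta)\,r\,\mathrm{d}\theta\,\mathrm{d}\beta\,\mathrm{d}r ,
\end{align*}
where $k$ is the kernel of the relevant type. Because the two types of the process differ only through $k$, this single identity covers both, which is exactly why the lemma states them together.

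Finally, I would insert ${\rm MISR}_{\rm PPP}=2/(\alpha-2)$ and the last display into $G={\rm MISR}_{\rm PPP}/{\rm MISR}_\phi$: the factor $2$ from ${\rm MISR}_{\rm PPP}$ combines with the $2\pi\lambda A$ in the denominator of ${\rm MISR}_\phi$ to produce the $4\pi\lambda A$ in the numerator of \eqref{eqn:GCTP}, the term $(\alpha-2)\lambda_p^2 r_0^{\alpha}$ lands in the denominator, and the triple integral appears as the inverse factor $\left(\int\cdots\right)^{-1}$, which is precisely the claimed expression.

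The algebra above is routine; the point that genuinely needs care --- and that I would expect to be the main obstacle --- is the justification of the first display, i.e., that the MISR-based gain $G$, introduced and validated in \cite{6897962,7322270,8648502,9770939} for cellular models with nearest-transmitter association, carries the same meaning for the bipolar dual-zone process. One must argue that (i) with a deterministic link distance the correct ``signal'' in the MISR is $P_t\,l(r_0)$, so that the SIR transfers as $P_\phi(T)\approx P_{\rm PPP}(T/G)$ because the Rayleigh-faded desired signal has the same distributional form in $\phi$ as in the reference PPP; and (ii) the triple integral defining ${\rm MISR}_\phi$ is finite, so that $G$ is well defined --- which holds in the operating regime $d<R_{\rm cs}$, since then the path-loss singularity at $r=d$, $\beta=0$ lies inside $S_1$, where $k$ vanishes in both \eqref{equ:k_final} and \eqref{equ:k_final2}, while on the complement $l$ is bounded and $\alpha>2$ ensures convergence of the radial integral. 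Once these points are settled, the lemma follows by the substitution just described.
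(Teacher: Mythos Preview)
Your proposal is correct and follows essentially the same route as the paper: compute ${\rm MISR}_{\rm DZHCP}$ by dividing the reduced-Palm mean interference from Theorems~\ref{thm:inter_1} and~\ref{thm:inter_2} by the mean signal power $P_t A r_0^{-\alpha}$, then form $G={\rm MISR}_{\rm PPP}/{\rm MISR}_{\rm DZHCP}$ with ${\rm MISR}_{\rm PPP}=2/(\alpha-2)$. Your additional care about the finiteness of the triple integral and the applicability of the MISR-gain heuristic to the bipolar setting goes beyond the paper's proof, which simply carries out the substitution and defers the validation of the approximation to the simulation comparisons in Figures~\ref{fig:highdensity} and~\ref{fig:lowdensity}.
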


\begin{proof}
Calculating the MISR for the Type \Rmnum{1} and Type \Rmnum{2} dual-zone hard-core processes is considerably more complex than a PPP model. Unlike the PPP, these models do not benefit from the equivalence properties provided by Slivnyak’s theorem, which simplifies analysis by asserting that conditioning on having a point at \(o\) is equivalent to simply adding \(o\) to the PPP. For the dual-zone hard-core processes, however, we must specifically evaluate the mean interference at a receiver’s location, denoted by \(\mathbb{E}_{(o,0)}^!\left[I_{z_o}\right]\), conditioned on having a transceiver pair with the receiver positioned at that location. 

For the Type \Rmnum{1} and Type \Rmnum{2} processes, the mean interference at the receiver can be calculated using Theorem \ref{thm:inter_1} and Theorem \ref{thm:inter_2}, respectively. The MISR for these models is then determined by dividing the mean interference by the average received power at the reference distance \(r_0\), which is
\begin{align}
&{\rm MISR}_{\rm DZHCP} = \frac{\lambda_p^2r_0^\alpha}{2\pi\lambda A}\int_0^\infty\int_0^{2\pi}\int_0^{2\pi} \nonumber\\
&\qquad l(\sqrt{r^2-2rd\cos\beta+d^2})k(r,\beta,0,\theta)r\mathrm{d}\theta\mathrm{d}\beta\mathrm{d}r, \label{equ:MISR_CTP}
\end{align}
where \(k(r,\beta,0,\theta)\) is defined by (\ref{equ:k_final}) for Type \Rmnum{1} and by (\ref{equ:k_final2}) for Type \Rmnum{2} of the dual-zone hard-core process.

Applying the definition of asymptotic gain from (\ref{eqn:asygain}), we can then derive the corresponding results as outlined in Lemma.
\end{proof}

Considering a Rayleigh fading scenario where the expected value of the channel fading coefficient \(h\) is \(\mathbb{E}[h] = 1\), the success probability for the HIP model with nearest transmitter association for a given SIR threshold \(T\) is \cite{haenggi2009stochastic}
\begin{equation}
P_{\rm PPP}(T) = \left(1 + T^{\frac{2}{\alpha}} \int_{T^{-\frac{2}{\alpha}}}^\infty \frac{1}{1 + t^{\frac{\alpha}{2}}} dt\right)^{-1}.
\end{equation}
This expression encapsulates the cumulative distribution function (ccdf) of the SIR. For a path-loss exponent \(\alpha = 4\), the equation simplifies to
\begin{equation}
P_{\rm PPP}(T) = \frac{1}{1 + \sqrt{T} \arctan(\sqrt{T})}.
\end{equation}

For any point process model \(\phi\) characterized by an asymptotic gain \(G\), as derived in Lemma \ref{lem:asympgain}, the success probability can be approximated by scaling the threshold \(T\) by \(G\):
\begin{equation}
P_\phi(T) \approx P_{\rm PPP}\left(\frac{T}{G}\right) = \left(1 + \left(\frac{T}{G}\right)^{\frac{2}{\alpha}} \int_{\left(\frac{T}{G}\right)^{-\frac{2}{\alpha}}}^\infty \frac{1}{1 + t^{\frac{\alpha}{2}}} dt\right)^{-1}.
\end{equation}
Specifically, for \(\alpha = 4\), the success probability for model \(\phi\) is approximated by
\begin{equation}
P_\phi(T) \approx \left(1 + \sqrt{\frac{T}{G}} \arctan\left(\sqrt{\frac{T}{G}}\right)\right)^{-1}.
\end{equation}

The asymptotic gain \(G\) varies depending on the spatial characteristics and interference patterns specific to the point process model under consideration. By selecting the appropriate asymptotic gain for each model, interference analysis can be precisely tailored to the unique spatial dynamics of each process. The provided approach enables a rigorous estimation of the success probability for Type \Rmnum{1} and Type \Rmnum{2} dual-zone hard-core processes, offering practical insights into their performance relative to standard PPP models. 
For Type \Rmnum{1} and Type \Rmnum{2} of the dual-zone hard-core processes, we get the following theorem. 

\begin{thm}
\label{thm:succ_prob}
The success probability for a receiver in the Type \Rmnum{1} and Type \Rmnum{2} dual-zone hard-core processes is given by
\begin{align}
&P_{\rm DZHCP}(T) = \nonumber \\
&\left(1 + \left(\frac{T}{G_{\rm DZHCP}}\right)^{\frac{2}{\alpha}} \int_{\left(\frac{T}{G_{\rm DZHCP}}\right)^{-\frac{2}{\alpha}}}^\infty \frac{1}{1 + t^{\frac{\alpha}{2}}} \, dt\right)^{-1},
\end{align}
where \(G_{\rm DZHCP}\) is the asymptotic gain calculated as outlined in Lemma \ref{lem:asympgain} and defined in (\ref{eqn:GCTP}).
\end{thm}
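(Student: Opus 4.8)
The plan is to obtain Theorem \ref{thm:succ_prob} by directly combining three ingredients that are already in place: the asymptotic-gain scaling rule $P_\phi(T)\approx P_{\rm PPP}(T/G)$, the closed form of the HIP reference success probability under Rayleigh fading with nearest-transmitter association, and the explicit gain $G_{\rm DZHCP}$ supplied by Lemma \ref{lem:asympgain}. Once Lemma \ref{lem:asympgain} is granted, the theorem is essentially a substitution and requires no further point-process machinery; the substantive work has already been done in Theorems \ref{thm:inter_1} and \ref{thm:inter_2} (the mean interference) and in Lemma \ref{lem:asympgain} (the MISR and hence $G$).

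Concretely I would proceed in three steps. First, recall the HIP benchmark: from the PGFL of the PPP together with $\mathbb{E}[h]=1$, the ccdf of the SIR for the HIP model with nearest-transmitter association is $P_{\rm PPP}(T)=\big(1+T^{2/\alpha}\int_{T^{-2/\alpha}}^{\infty}(1+t^{\alpha/2})^{-1}\,\mathrm{d}t\big)^{-1}$. Second, invoke the gain-scaling principle: the SIR ccdf of a hard-core-type process is, in the operating regime of interest, a horizontal shift on a logarithmic scale of that of the PPP, the shift being exactly the asymptotic gain $G$, which amounts to replacing $T$ by $T/G$. Third, substitute $G=G_{\rm DZHCP}$ from (\ref{eqn:GCTP}), using the kernel $k(r,\beta,0,\theta)$ from (\ref{equ:k_final}) for Type \Rmnum{1} and from (\ref{equ:k_final2}) for Type \Rmnum{2}. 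This produces the displayed formula, and specializing to $\alpha=4$ gives the arctan form $P_{\rm DZHCP}(T)\approx\big(1+\sqrt{T/G_{\rm DZHCP}}\arctan\sqrt{T/G_{\rm DZHCP}}\big)^{-1}$ as a consistency check.

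The real difficulty is not in this assembly but in legitimizing the asymptotic-gain approach for the bipolar dual-zone setting, since it was devised for cellular models in which the serving link is the \emph{nearest} point of the process. The observation that makes the transfer work is that the MISR remains well posed here: once the fading on the intended link is factored out (it is treated separately in the Rayleigh computation of $P_{\rm PPP}$), the desired signal power is the deterministic $P_t A r_0^{-\alpha}$ at the fixed link distance $r_0=d$, so the fading cancels in the interference-to-signal ratio and the numerator is exactly the mean interference of Theorem \ref{thm:inter_1} / Theorem \ref{thm:inter_2}; hence ${\rm MISR}_{\rm DZHCP}=E_{(o,0)}^!(I_{z_o})\,r_0^{\alpha}/(P_tA)$ is finite and explicit, and $G_{\rm DZHCP}={\rm MISR}_{\rm PPP}/{\rm MISR}_{\rm DZHCP}$ with ${\rm MISR}_{\rm PPP}=2/(\alpha-2)$. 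What genuinely cannot be proved — and this is the caveat I would flag rather than an obstacle to be overcome — is that the entire SIR ccdf, not merely its low-outage tail, is captured by a pure horizontal shift; the support for that is the numerical agreement reported in Section \ref{sec:numerical}, so the identity in Theorem \ref{thm:succ_prob} should be understood as an \emph{approximation} whose accuracy is established empirically there.
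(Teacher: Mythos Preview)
Your proposal is correct and matches the paper's approach: Theorem \ref{thm:succ_prob} is obtained in the paper precisely by substituting the gain $G_{\rm DZHCP}$ of Lemma \ref{lem:asympgain} into the scaling rule $P_\phi(T)\approx P_{\rm PPP}(T/G)$ with the HIP Rayleigh-fading closed form for $P_{\rm PPP}$, and the paper likewise treats the result as an approximation whose accuracy is validated numerically (Figures \ref{fig:highdensity} and \ref{fig:lowdensity}). Your additional remark justifying why the MISR is well posed in the bipolar setting is a helpful clarification that the paper leaves implicit.
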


For a path-loss exponent \(\alpha = 4\), a more specific form of the approximated success probability can be found for the same processes:
\begin{cor}
\label{cor:succ_prob_alpha4}
The success probability for a receiver in the Type \Rmnum{1} and Type \Rmnum{2} dual-zone hard-core processes, with \(\alpha = 4\), simplifies to
\begin{equation}
P_{\rm DZHCP}(T) \approx \left(1 + \sqrt{\frac{T}{G_{\rm DZHCP}}} \arctan \left(\sqrt{\frac{T}{G_{\rm DZHCP}}}\right)\right)^{-1},
\end{equation}
where \(G_{\rm DZHCP}\) is defined in (\ref{eqn:GCTP}).
\end{cor}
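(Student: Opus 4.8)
\emph{Proof proposal.} The plan is to specialize the general expression in Theorem~\ref{thm:succ_prob} to the case $\alpha = 4$ and to evaluate the resulting definite integral in closed form. First I would substitute $\alpha = 4$, so that $2/\alpha = 1/2$ and $\alpha/2 = 2$; this turns the integrand $1/(1+t^{\alpha/2})$ into $1/(1+t^2)$ and the lower limit $(T/G_{\rm DZHCP})^{-2/\alpha}$ into $(T/G_{\rm DZHCP})^{-1/2}$. Introducing the shorthand $u \triangleq \sqrt{T/G_{\rm DZHCP}}$, which is strictly positive, the bracketed quantity in Theorem~\ref{thm:succ_prob} becomes $1 + u \int_{1/u}^{\infty} \frac{\mathrm{d}t}{1+t^2}$.

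Next I would evaluate the elementary integral $\int_{1/u}^{\infty} \frac{\mathrm{d}t}{1+t^2} = \frac{\pi}{2} - \arctan(1/u)$, using $\arctan$ as the antiderivative of $1/(1+t^2)$. The key step is then to invoke the reflection identity $\arctan(x) + \arctan(1/x) = \pi/2$, valid for all $x > 0$, with $x = u$; this gives $\frac{\pi}{2} - \arctan(1/u) = \arctan(u)$. Substituting back, the bracketed term reduces to $1 + u\arctan(u) = 1 + \sqrt{T/G_{\rm DZHCP}}\,\arctan\!\big(\sqrt{T/G_{\rm DZHCP}}\big)$, and taking the reciprocal yields the claimed expression.

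There is no genuine obstacle here: the argument is a direct computation layered on top of Theorem~\ref{thm:succ_prob} (so the $\approx$ is simply inherited). The only point requiring a moment's care is checking that the reflection identity applies, i.e.\ that $T/G_{\rm DZHCP} > 0$; this holds because $T$ is a positive SIR threshold and $G_{\rm DZHCP}$, being the ratio of the PPP MISR to ${\rm MISR}_{\rm DZHCP}$ in Lemma~\ref{lem:asympgain}, is strictly positive whenever the mean-interference integral in~(\ref{eqn:GCTP}) is finite and nonzero. Once that is noted, the corollary follows immediately.
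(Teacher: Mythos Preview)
Your proposal is correct and matches the paper's (implicit) route: the paper has already recorded the $\alpha=4$ simplification $P_{\rm PPP}(T)=\bigl(1+\sqrt{T}\arctan\sqrt{T}\bigr)^{-1}$ just before Theorem~\ref{thm:succ_prob}, and the corollary is obtained by substituting $T/G_{\rm DZHCP}$ for $T$, which is exactly the computation you carry out in detail.
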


These formulas model the success probability by adapting the classical success probability formula for a PPP, modified by the asymptotic gain which accounts for the reduced interference due to the dual-zone hard-core process. This adaptation is crucial in networks where space-reuse strategies, such as those deployed in Type \Rmnum{1} and Type \Rmnum{2} processes, significantly affect the distribution and intensity of interference. The success probability expression provides a quantifiable measure of how these strategies improve network performance relative to a standard PPP, particularly under various environmental and operational conditions dictated by the parameter \(\alpha\).

\begin{figure}
    \centering
    \includegraphics[width=0.5\textwidth]{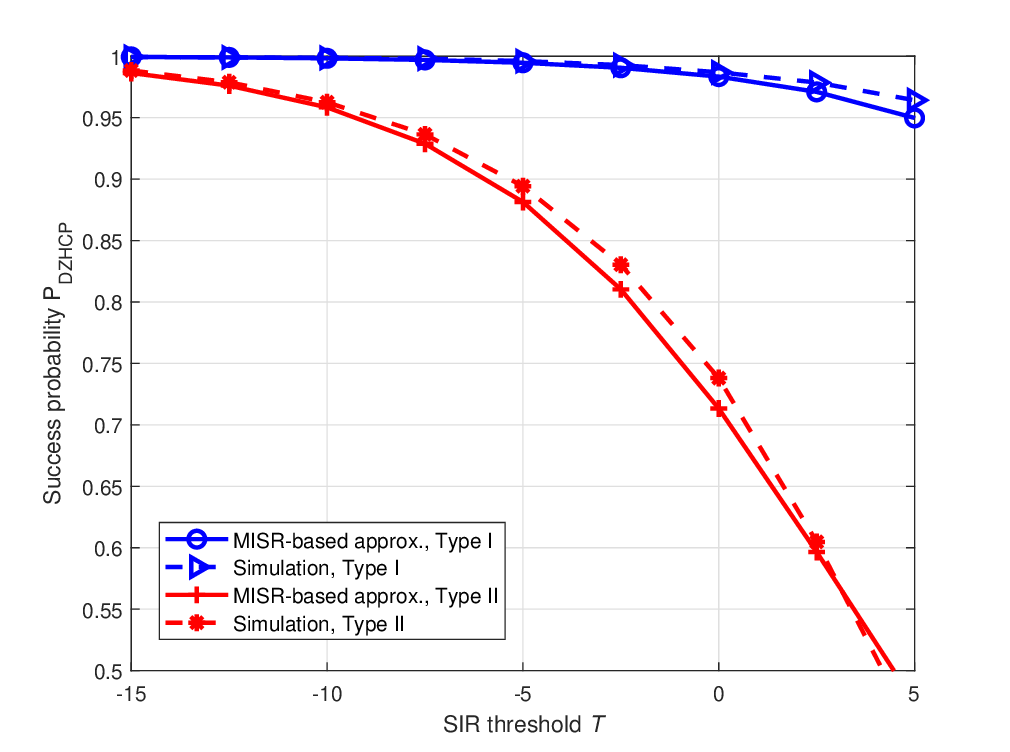}
    \caption{SIR ccdf for the dual-zone hard-core process and MISR-based approximation for \(\lambda_p=1\times10^{-4} {\rm m}^{-2}\).}
    \label{fig:highdensity}
\end{figure}

\begin{figure}
    \centering
    \includegraphics[width=0.5\textwidth]{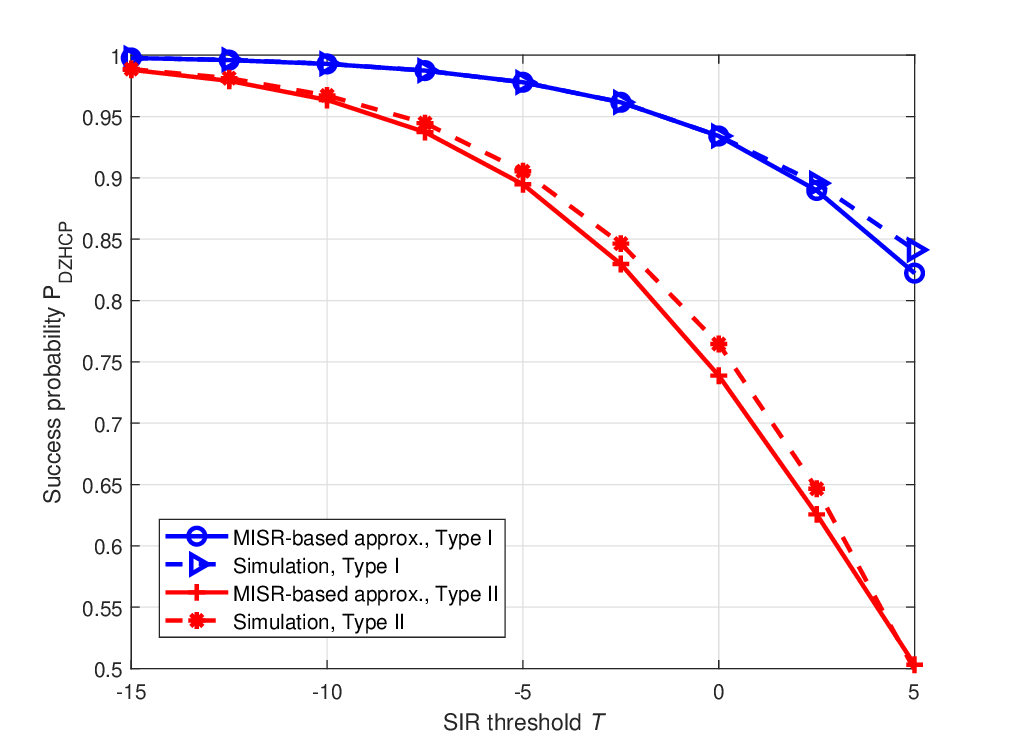}
    \caption{SIR ccdf for the dual-zone hard-core process and MISR-based approximation for \(\lambda_p=5\times10^{-5} {\rm m}^{-2}\).}
    \label{fig:lowdensity}
\end{figure}

To assess the accuracy of the MISR approximation for the bipolar case, we compare its results with simulation data in this work. Figures \ref{fig:highdensity} and \ref{fig:lowdensity} present the success probability under varying densities and for commonly used values of \(T\), showing that the proposed MISR approximation yields accurate results for all practical success probabilities. These results validate the applicability of the MISR approximation for the bipolar dual hard-core model, despite the differences in interference characteristics when compared to cellular networks.

\section{Numerical Results}
\label{sec:numerical}
The numerical results are obtained from the analytical results we have derived.
The default configurations of system model are as follows (also see Table \ref{table:parameter}).
Unless otherwise specified, the intensity of potential transmitters is set as $\lambda_p=1\times10^{-5}\mathrm{m}^{-2}$ and the virtual carrier sensing radius is $R_{\mathrm{tx}}=100$m.
The physical carrier sensing radius is always set as $R_{\mathrm{cs}}=1.2R_{\mathrm{tx}}$ and the distance between transmitter and receiver is set as $d=0.8R_{\mathrm{tx}}$.
The transmit power is set as $P_t=20$dBm (0.1W), and the path loss model is $l(r) = A r^{-\alpha}$ with $\alpha=3.5$ and $A=0.01$.
In the following figures, the legend entries `Type \Rmnum{1}' and `Type \Rmnum{2}' represent the RTS/CTS handshake protocol as modeled by the Type \Rmnum{1} and Type \Rmnum{2} dual-zone hard-core processes, respectively. Similarly, the entries `CSMA \Rmnum{1}' and `CSMA \Rmnum{2}' correspond to the CSMA protocol, modeled using the Type \Rmnum{1} and Type \Rmnum{2} Matérn hard-core processes, respectively.

\begin{table}
\centering
\caption{SYSTEM PARAMETERS}
\label{table:parameter}
\begin{tabular}[!ht]{c|c|c}
\hline
\hline
\textbf{Symbol} & \textbf{Description} & \textbf{Value}\\
\hline
$\lambda_p$ & Intensity of potential transmitters & $1\times10^{-5}\mathrm{m}^{-2}$\\
\hline
$\lambda$ & Intensity of the Type \Rmnum{1} or \Rmnum{2} process & N/A\\
\hline
$R_{\mathrm{tx}}$ & Virtual carrier sensing radius & $100$m\\
\hline
$R_{\mathrm{cs}}$ & Physical carrier sensing radius & $1.2R_{\mathrm{tx}}$\\
\hline
$d$ & Distance between transmitter and receiver & $0.8R_{\mathrm{tx}}$\\
\hline
$P_t$ & Transmit power & $20$dBm\\
\hline
$T$ & SIR threshold & 0 dB\\
\hline
$\alpha$ & Path loss exponent & $3.5$\\
\hline
\hline
\end{tabular}
\end{table}

\begin{figure}
    \centering
    \includegraphics[width=0.5\textwidth]{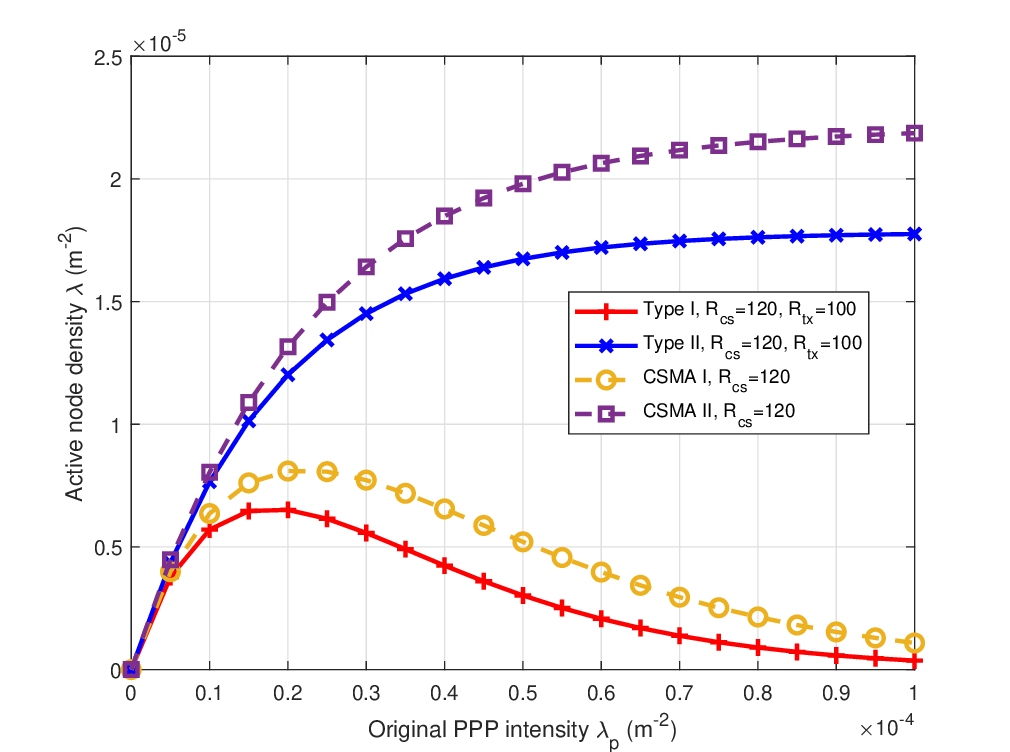}
    \caption{Active node density $\lambda$ as a function of the original PPP intensity $\lambda_p$ for various RTS/CTS configurations.}
    \label{fig:active_density_vs_PPP}
\end{figure}

Figure \ref{fig:active_density_vs_PPP} illustrates the relationship between the active node density $\lambda$ and the original PPP intensity $\lambda_p$ under different configurations of the RTS/CTS mechanism, with fixed virtual carrier sensing range $R_{\text{tx}}$ and physical carrier sensing range $R_{\text{cs}}$.
For both CSMA \Rmnum{1} and Type \Rmnum{1} models, we observe an initial increase in active node density $\lambda$ as the original PPP intensity $\lambda_p$ increases, followed by a subsequent decrease. This trend reaches its peak when $\lambda_p = \frac{1}{V_o}$, where $V_o$ represents the area of the exclusion region, yielding a maximum value of $\frac{1}{V_o} e^{-1}$. This behavior reflects the impact of the protection region, where the density of active nodes is initially enhanced by a higher node availability but later constrained as interference effects intensify.
In contrast, for the CSMA \Rmnum{2} and Type \Rmnum{2} models, active node density $\lambda$ increases continuously with the original PPP intensity $\lambda_p$, eventually approaching the reciprocal of the exclusion region. This gradual increase indicates that these models allow for higher active node density even at larger values of $\lambda_p$, suggesting a more effective management of interference as $\lambda_p$ grows.

\begin{figure}
    \centering
    \includegraphics[width=0.5\textwidth]{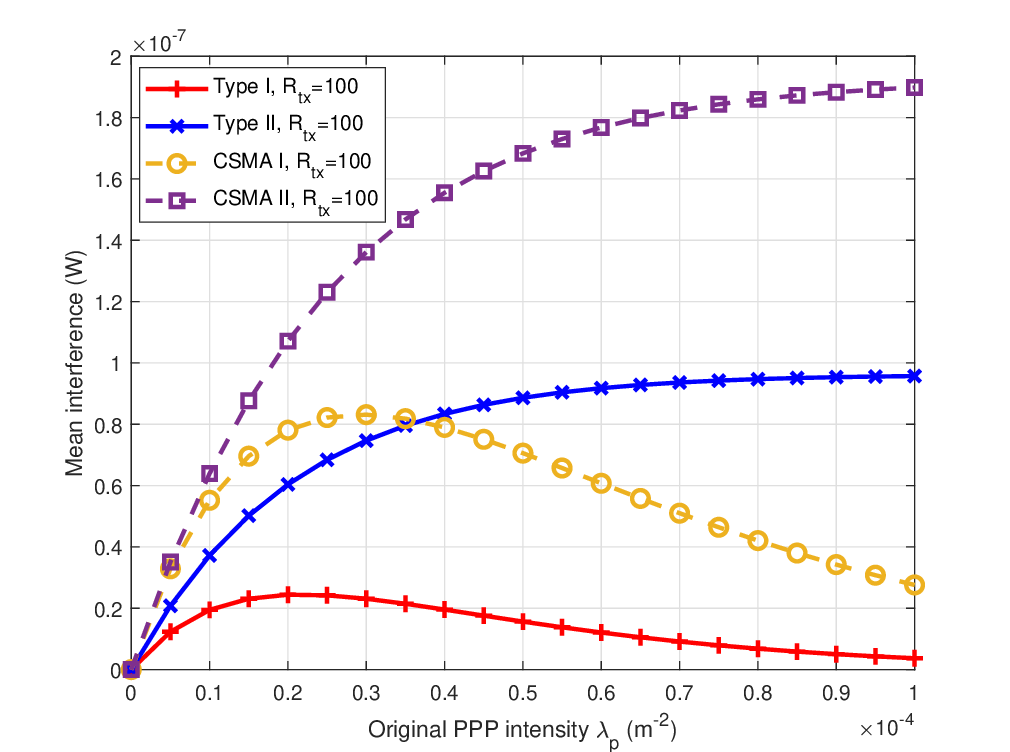}
    \caption{Mean interference as a function of the original PPP intensity \(\lambda_p\) under different schemes.}
    \label{fig:avg_interference_vs_lambda}
\end{figure}

Figure \ref{fig:avg_interference_vs_lambda} shows the variation in mean interference as the original PPP intensity \(\lambda_p\) changes, with a fixed virtual carrier sensing range $R_{\text{tx}}$, physical carrier sensing radius \(R_{\mathrm{cs}}\), and distance \(d\) between transceivers. For CSMA I and Type \Rmnum{1} schemes, the mean interference initially increases with \(\lambda_p\) before eventually decreasing. This trend mirrors the behavior observed in their active node densities, with both metrics reaching their peak at the same \(\lambda_p\) value. Similarly, for CSMA II and Type \Rmnum{2} schemes, the mean interference exhibits a similar trend with respect to \(\lambda_p\), matching the behavior of their active node densities as well. 
However, it is notable that when \(\lambda_p < 3 \times 10^{-5}\mathrm{m}^{-2}\), the active node density for CSMA I is lower than that for Type \Rmnum{2}, yet the mean interference for CSMA I exceeds that of Type \Rmnum{2}. This indicates that at low densities, even though CSMA I has fewer active nodes than Type \Rmnum{2}, the structural differences in their protection regions result in a higher likelihood of interference in CSMA I. Consequently, the overall mean interference remains higher for CSMA I in this regime.

\begin{figure}
    \centering
    \includegraphics[width=0.49\textwidth]{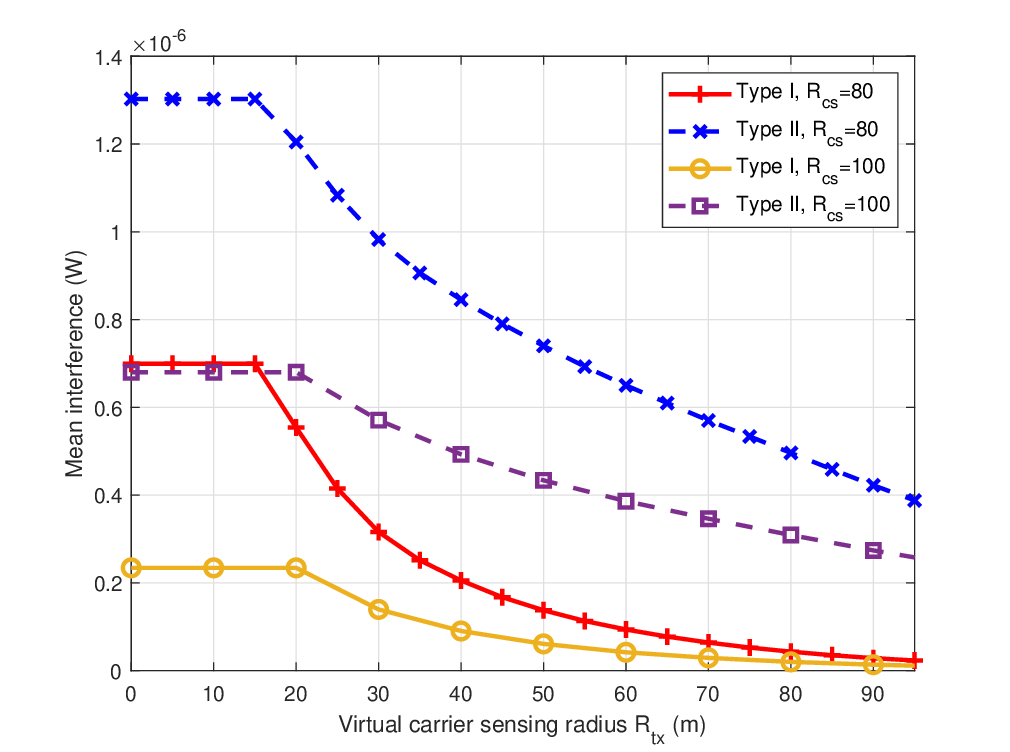}
    \caption{Mean interference as a function of virtual carrier sensing radius \( R_{\mathrm{tx}} \) with fixed physical carrier sensing radius \( R_{\mathrm{cs}} \) and distance \( d \).}
    \label{fig:avg_interference_vs_Rtx}
\end{figure}

Figure \ref{fig:avg_interference_vs_Rtx} illustrates the relationship between mean interference and the virtual carrier sensing range \( R_{\mathrm{tx}} \) under varying schemes with a constant physical carrier sensing radius \( R_{\mathrm{cs}} \) and distance \( d \). When \( R_{\mathrm{tx}} \) is relatively small, the mean interference does not significantly change with increasing \( R_{\mathrm{tx}} \). This is because, at smaller values of \( R_{\mathrm{tx}} \), the protection region of the transmitter fully covers the protection region of the receiver, leading to an interference level that decreases as \( R_{\mathrm{cs}} \) increases.
However, as \( R_{\mathrm{tx}} \) continues to increase and the transmitter's protection region no longer completely encompasses the receiver's protection region, the mean interference starts to decrease with increasing \( R_{\mathrm{tx}} \). 

\begin{figure}
    \centering
    \includegraphics[width=0.5\textwidth]{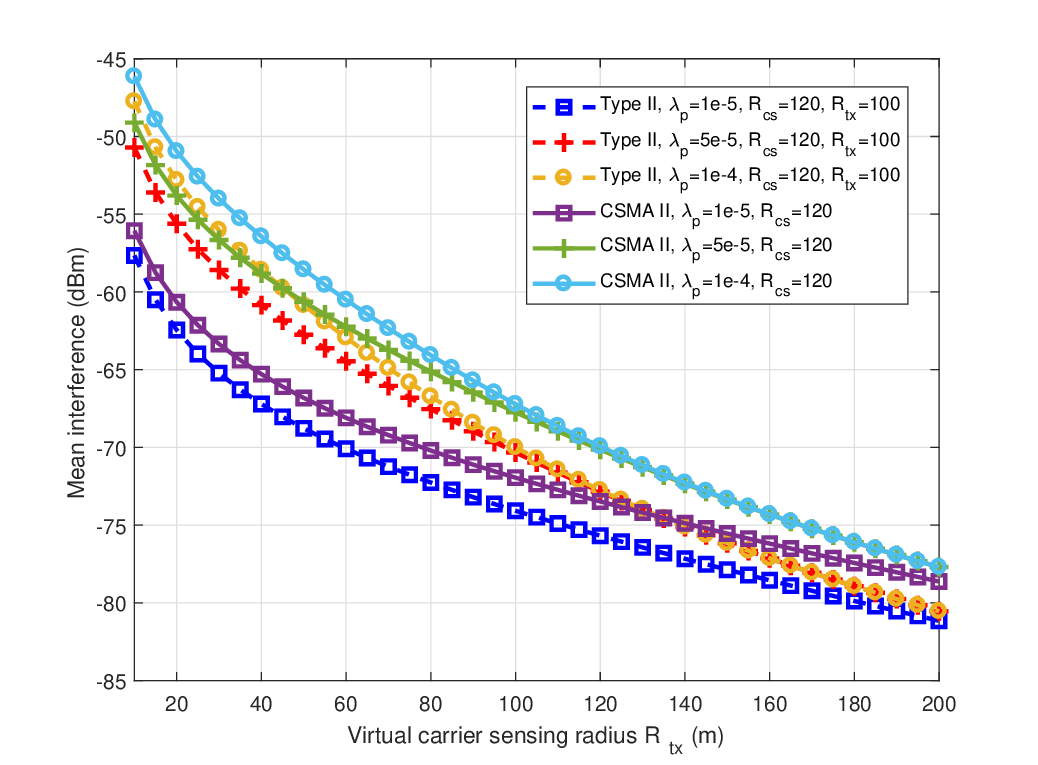}
    \caption{Mean interference as a function of virtual carrier sensing radius \( R_{\mathrm{tx}} \) for Type \Rmnum{2} and CSMA II models with physical carrier sensing radius \( R_{\mathrm{cs}} = 1.2 R_{\mathrm{tx}} \).}
    \label{fig:avg_interference_vs_Rtx_TypeII_CSMAII}
\end{figure}

Figure \ref{fig:avg_interference_vs_Rtx_TypeII_CSMAII} examines the relationship between mean interference and virtual carrier sensing range \( R_{\mathrm{tx}} \) in Type \Rmnum{2} and CSMA II models, with a virtual carrier sensing radius \( R_{\mathrm{cs}} = 1.2 R_{\mathrm{tx}} \). In both models, as \( R_{\mathrm{tx}} \) increases, indicating an expansion in protection region, the three CSMA II curves begin to converge, showing a similar trend. A similar convergence phenomenon is observed for the three curves in the Type \Rmnum{2} model. 
This pattern suggests that for the Type \Rmnum{2} hard-core point processes, as the protection region expands, the mean interference level becomes more dependent on the size of the protection region rather than the original PPP intensity. 

\begin{figure}
    \centering
    \includegraphics[width=0.5\textwidth]{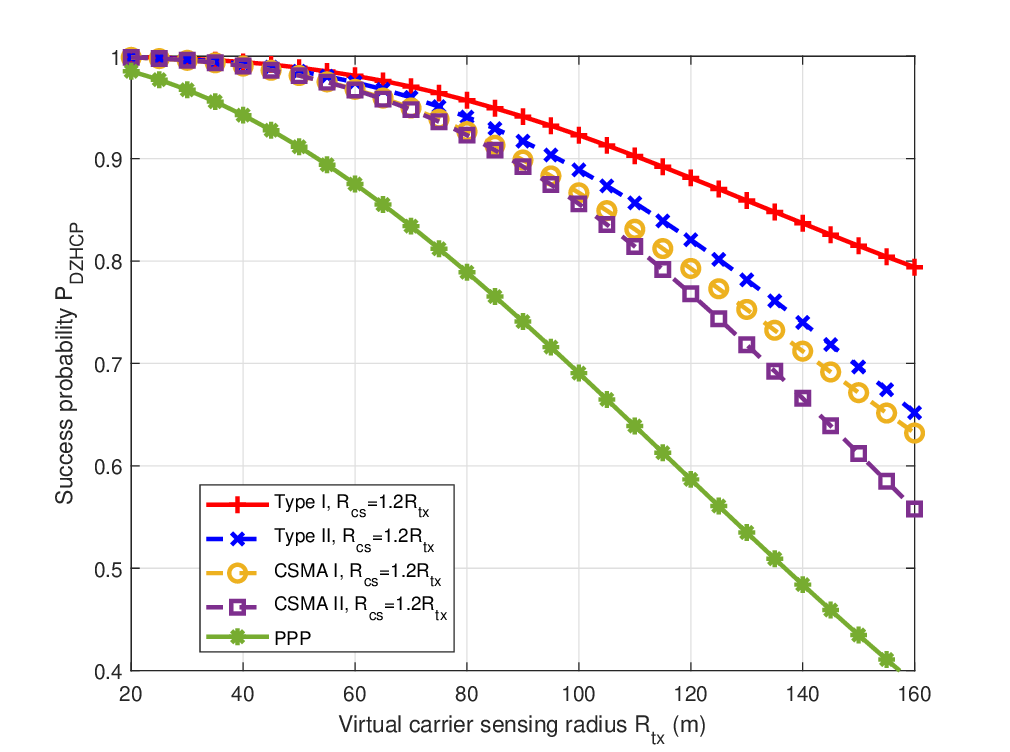}
    \caption{Success probability as a function of virtual carrier sensing radius \( R_{\mathrm{tx}} \) with fixed ratio \( d : R_{\mathrm{tx}} : R_{\mathrm{cs}} = 0.8:1:1.2 \).}
    \label{fig:success_probability_vs_Rtx1}
\end{figure}

Figure \ref{fig:success_probability_vs_Rtx1} illustrates the relationship between success probability \( P_{\rm DZHCP} \) and the virtual carrier sensing range \( R_{\mathrm{tx}} \), under the fixed ratio \( d : R_{\mathrm{tx}} : R_{\mathrm{cs}} = 0.8 : 1 : 1.2 \). As \( R_{\mathrm{tx}} \) increases, there is a noticeable decrease in the success probability \( P_{\rm DZHCP} \). This trend can be explained by two competing effects of increasing \( R_{\mathrm{tx}} \): while a larger \( R_{\mathrm{tx}} \) helps to reduce mean interference by expanding the protection region, it simultaneously weakens the received signal strength, which negatively impacts the success probability. This phenomenon highlights that, under proportional scaling, the rate of reduction in mean interference is slower than the rate of decrease in received signal strength, leading to an overall reduction in success probability as \( R_{\mathrm{tx}} \) grows.

\begin{figure}
    \centering
    \includegraphics[width=0.5\textwidth]{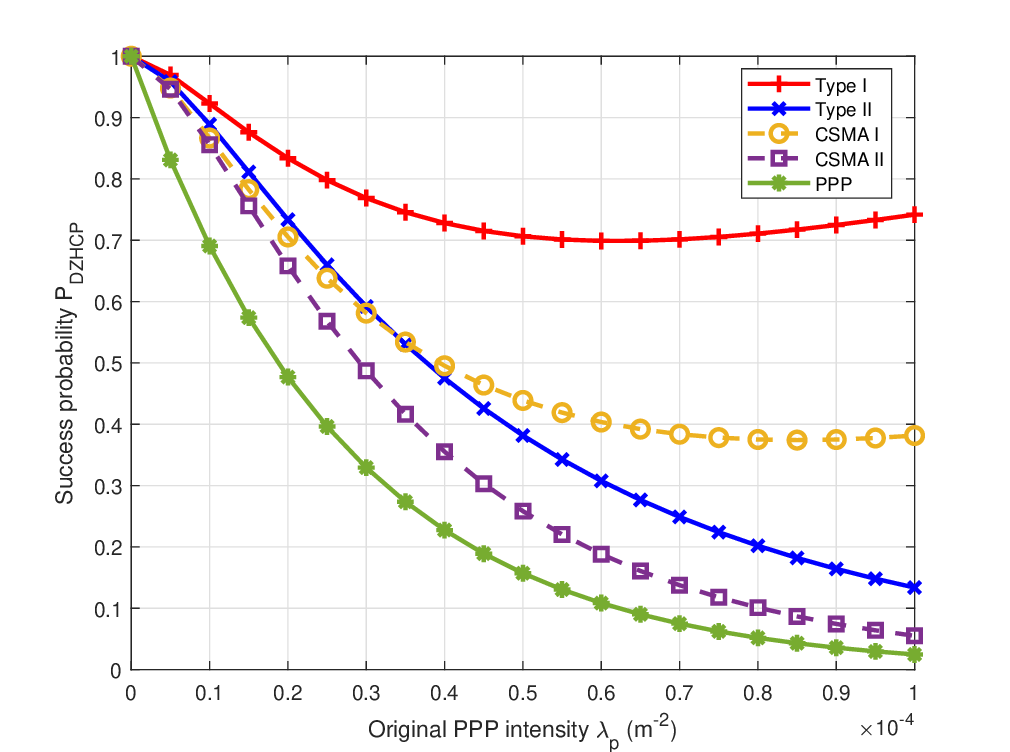}
    \caption{Success probability \( P_{\rm DZHCP} \) as a function of PPP intensity \( \lambda_p \).}
    \label{fig:success_probability_vs_lambda_p1}
\end{figure}

Figure \ref{fig:success_probability_vs_lambda_p1} shows the relationship between the success probability \( P_{\rm DZHCP} \) and the original PPP intensity \( \lambda_p \) for different network models. For the Type \Rmnum{1} and CSMA I models, the success probability \( P_{\rm DZHCP} \) initially decreases with an increase in \( \lambda_p \), then increases after reaching a certain point. This behavior can be explained by the effect of mean interference, which first rises with increasing \( \lambda_p \) but then diminishes as the density becomes very high.
On the other hand, for the Type \Rmnum{2} and CSMA II models, the success probability \( P_{\rm DZHCP} \) generally decreases as \( \lambda_p \) increases, but this decrease is not linear. Specifically, at lower values of \( \lambda_p \), the success probability may drop quickly, while at higher values, the rate of decrease slows down, showing a gradual decline. This pattern suggests that for these models, the success probability asymptotically stabilizes as the PPP intensity increases.

\begin{figure}
    \centering
    \includegraphics[width=0.5\textwidth]{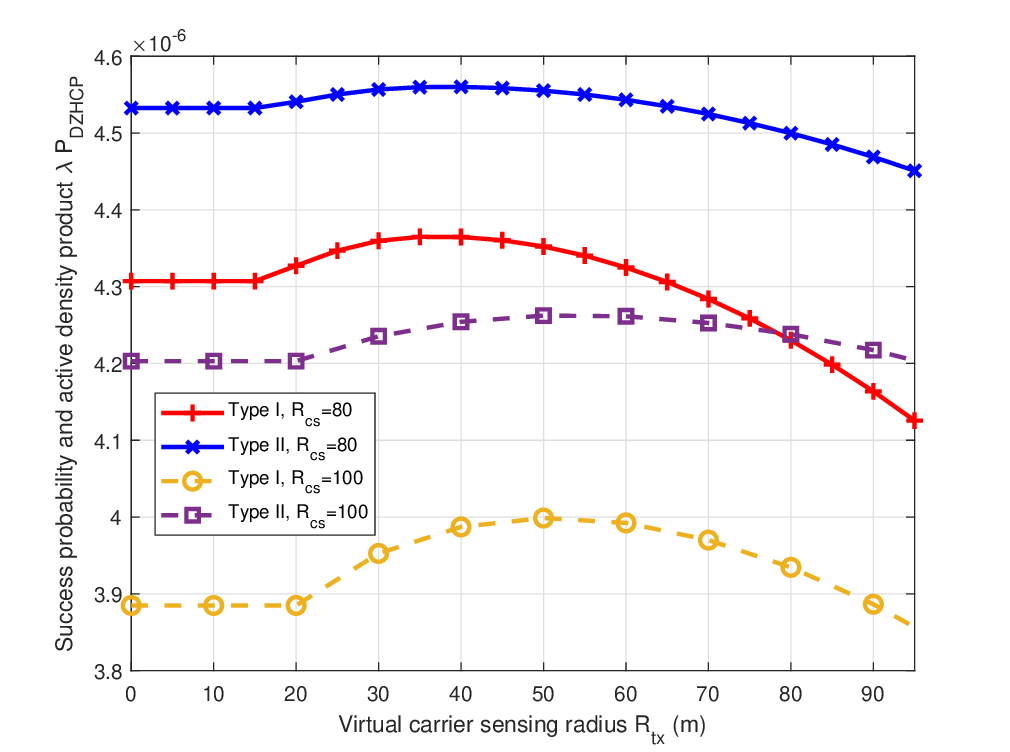}
    \caption{Success probability and active density product as a function of \( R_{\rm tx} \) (fixed \( R_{\rm cs} \) and \( d \)).}
    \label{fig:success_probability_vs_Rtx}
\end{figure}

Figure \ref{fig:success_probability_vs_Rtx} illustrates the relationship between the product of success probability and active density with respect to the virtual carrier sensing range \( R_{\mathrm{tx}} \), under fixed values of \( R_{\rm cs} \) and \( d \). 
When \( R_{\rm tx} \) is small, this metric remains relatively constant as \( R_{\rm tx} \) increases. This stability occurs because, at lower \( R_{\rm tx} \) values, the protection region of the transmitter completely covers the receiver’s protection region, leading to no interference fluctuation. However, as \( R_{\rm tx} \) continues to increase, there comes a point where the transmitter's protection region no longer fully covers the receiver's region. Beyond this point, the metric first increases with \( R_{\rm tx} \) and then decreases.
Notably, the maximum value for both cases (\( R_{\rm cs} = 80 \)m and \( R_{\rm cs} = 100 \)m) occurs around \( R_{\rm tx} = 0.5 R_{\rm cs} \), indicating that the network throughput is optimal at this proportion. This finding highlights the importance of maintaining a balanced ratio between \( R_{\rm tx} \) and \( R_{\rm cs} \) to achieve optimal performance.

\begin{figure}
    \centering
    \includegraphics[width=0.5\textwidth]{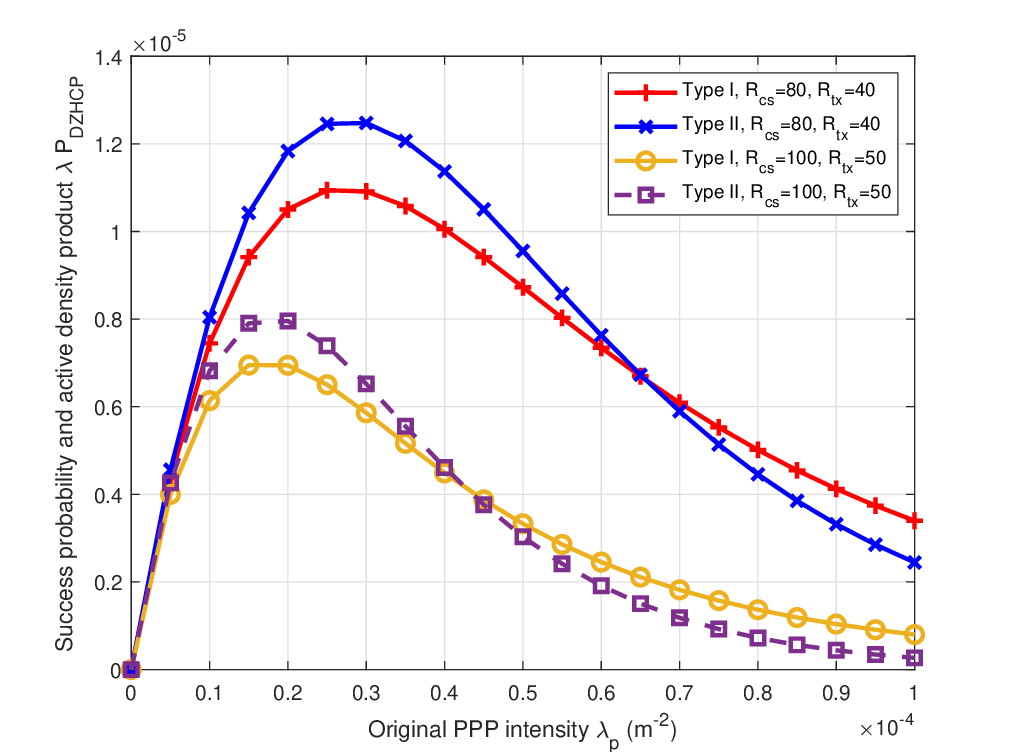}
    \caption{Success probability and active density product as a function of the original PPP intensity \( \lambda_p \).}
    \label{fig:success_probability_vs_lambda_p}
\end{figure}

Figure \ref{fig:success_probability_vs_lambda_p} illustrates the relationship between the product of success probability and active density with respect to the intensity of the original PPP \( \lambda_p \).
As shown in the figure, both Type \Rmnum{1} and Type \Rmnum{2} networks exhibit a trend where this metric initially increases with the increase in \( \lambda_p \), reaching a peak, and then decreases as \( \lambda_p \) continues to grow. This pattern suggests an optimal network density where the product of success probability and active density is maximized.
For both Type \Rmnum{1} and Type \Rmnum{2}, under fixed values of \( R_{\rm cs} \) and \( R_{\rm tx} \), the maximum values of this metric appear at similar values of \( \lambda_p \). This indicates that, when \( R_{\rm tx} \) and \( R_{\rm cs} \) are held constant, the optimal operating point in terms of network density is comparable for both types. This finding highlights a similar balance point in network density for achieving peak performance in both Type \Rmnum{1} and Type \Rmnum{2} networks.

\section{Conclusion}
\label{sec:conclusion}
In this paper, we have developed marked point process models to delineate the spatial distribution of transceivers under the RTS/CTS handshake mechanism in WLAN environments. These models have enabled a quantitative examination of how mean interference and the success probability of transmissions are influenced by critical system parameters such as carrier sensing ranges and the density of network nodes. Our analysis not only sheds light on the dynamics of mean interference but also elaborates on the conditions that enhance the probability of successful transmissions, thus providing a holistic view of network performance under various spatial configurations.

The findings underscore the significant impact of the spatial arrangement of transceivers and the implementation of specific handshake mechanisms on network efficiency, particularly concerning interference management and transmission success. This research highlights the utility of advanced point process modeling techniques in more complex scenarios, suggesting that such methods could be pivotal in designing future wireless networks that optimize interference suppression while enhancing coverage and connectivity. Looking forward, we envision extending these models to accommodate dynamic network environments and real-world operational conditions, which could further refine our understanding and application of these methodologies in the planning and optimization of next-generation wireless systems.

\appendices

\section{Proof of Theorem 1}
\label{appendix:A}

We define $\mathbb{\widetilde{M}}$ as the set of all marked point processes. The interference experienced by the receiver located at $z_o$ is calculated by summing the powers received from all other transmitters in the network, where each transmitter is specified by its location $x$ and orientation $\theta$ within the point process $\widetilde{\Phi}$.

The expectation of interference, $E_{(o,\theta_o)}^!(I_{z_o})$, under the reduced Palm probability measure $P_{(o,\theta_o)}^!$, conditioned on a transmitter at the origin $o$ with orientation $\theta_o$, is:
\begin{align}
E_{(o,\theta_o)}^!&(I_{z_o}) = E_{(o,\theta_o)}^!\Big(\sum_{(x,\theta)\in\widetilde{\Phi}}P_t l(|x-z_o|)\Big) \nonumber \\
&= \lambda P_t \int_{\mathbb{R}^2 \times [0, 2\pi]} l(|x - z_o|) \mathcal{K}_{\theta_o}(\mathrm{d}(x, \theta)). \label{equ:mean_I}
\end{align}

The reduced second-order factorial measure $\mathcal{K}_{\theta_o}(B \times L)$ for regions $B \subset \mathbb{R}^2$ and $L \subset [0, 2\pi]$ is defined as
\begin{eqnarray}
\mathcal{K}_{\theta_o}(B \times L) = \frac{1}{\lambda} \int_{\mathbb{\widetilde{M}}} \widetilde{\varphi}(B \times L) P_{(o, \theta_o)}^!(\mathrm{d}\widetilde{\varphi}),
\end{eqnarray}
where $\widetilde{\varphi}(B \times L)$ counts the points in $\widetilde{\Phi}$ that fall within $B$ and $L$. This measure represents the expected number of points in $B \times L$ under the Palm distribution $P_{(o, \theta_o)}^!$.

For the second-order factorial moment measure $\alpha^{(2)}(B_1 \times B_2 \times L_1 \times L_2)$, which represents pairs of points in specified spatial and mark regions, we have:
\begin{align}
\alpha^{(2)}(B_1 &\times B_2 \times L_1 \times L_2) = \nonumber \\
&\frac{\lambda^2}{2\pi} \int_{B_1 \times L_1} \mathcal{K}_{\theta}((B_2 - x) \times L_2) \, \mathrm{d}(x, \theta). \label{equ:alpha2}
\end{align}

The second-order product density, $\varrho^{(2)}(x_1, x_2, \theta_1, \theta_2)$, describes the likelihood of finding pairs of points with specific relative positions and marks. For a stationary process, we have:
\begin{align}
&\alpha^{(2)}(B_1 \times B_2 \times L_1 \times L_2) = \nonumber \int_{B_1 \times L_1} \nonumber\\
&\left( \int_{(B_2 - x_1) \times L_2} \varrho^{(2)}(x_2, \theta_1, \theta_2) \, \mathrm{d}(x_2, \theta_2) \right) \mathrm{d}(x_1, \theta_1). \label{equ:alpha3}
\end{align}

This leads to:
\begin{eqnarray}
\mathcal{K}_{\theta_o}(B \times L) = \frac{2\pi}{\lambda^2} \int_{B \times L} \varrho^{(2)}(x, \theta_o, \theta) \, \mathrm{d}(x, \theta). \label{equ:derive}
\end{eqnarray}

Thus, the mean interference $E_{(o, \theta_o)}^!(I_{z_o})$ becomes
\begin{align}
E_{(o, \theta_o)}^!&(I_{z_o}) = \nonumber \\
\frac{2\pi P_t}{\lambda} &\int_{\mathbb{R}^2 \times [0, 2\pi]} l(|x - z_o|) \varrho^{(2)}(x, \theta_o, \theta) \, \mathrm{d}(x, \theta). \label{equ:mean_I_2}
\end{align}
This expression incorporates the dependencies in location and orientation, providing insight into interference distribution.

To derive the second-order product density $\varrho^{(2)}(x,\theta_o,\theta)$, we utilize its relation to the two-point Palm probability $k(x,\theta_o,\theta)$. This probability quantifies the likelihood that two transmitters, separated by a vector $x$ and with orientation marks $\theta_o$ and $\theta$, are both active (i.e., not thinned out by the RTS/CTS mechanism). The relationship is given by the following equation derived from stochastic geometry principles:
\begin{eqnarray}
\varrho^{(2)}(x,\theta_o,\theta)=\frac{\lambda_p^2}{4\pi^2}k(x,\theta_o,\theta).
\end{eqnarray}
Here, $k(x,\theta_o,\theta)$ acts as a modulation factor adjusting this density based on the spatial and orientational configuration of transmitters.

We consider a typical scenario where the orientation of the initial transmitter is normalized to zero, $\theta_o = 0$, simplifying the analysis without loss of generality. The typical receiver is hence positioned at $z_o = (d,0)$. The interference experienced by this receiver, $I_{z_o}$, is then calculated as:
\begin{equation}
E_{(o,0)}^!(I_{z_o}) = \frac{\lambda_p^2P_t}{2\pi\lambda}\int_{\mathbb{R}^2\times[0,2\pi]}l(|x-z_o|)k(x,0,\theta)\mathrm{d}(x,\theta),
\end{equation}
where $l(|x-z_o|)$ denotes the path-loss function, and $k(x,0,\theta)$ modifies the intensity of points contributing to the interference based on their spatial and mark relationships.

This integral can be expressed in polar coordinates as
\begin{align}
E_{(o,0)}^!&(I_{z_o}) = \frac{\lambda_p^2P_t}{2\pi\lambda}\int_0^\infty\int_0^{2\pi}\int_0^{2\pi} \nonumber \\
&l(\sqrt{r^2-2rd\cos\beta+d^2})k(r,\beta,0,\theta)r\mathrm{d}\theta\mathrm{d}\beta\mathrm{d}r. \label{equ:inter_final_proof}
\end{align}
Here, $r$ is the radial distance from the origin, $\beta$ is the angle difference from the direction to the receiver, and $\theta$ is the mark of the interfering transmitter.

The function $k(r,\beta,0,\theta)$, critical for this calculation, reflects the conditional probability that two transmitters, separated by distance $r$ with phase angle difference $\beta$ and marks $0$ and $\theta$, do not violate the RTS/CTS protocol rules. Specifically, it evaluates to zero if the triple $(r,\beta,\theta)$ falls within any of the sets $S_1$, $S_2$, or $S_3$, indicating immediate exclusion due to proximity within the guard regions. Otherwise, the transmitters are retained with a probability of $\exp(-\lambda_p V(r, \beta, \theta))$, reflecting the void probability of the exclusion region $V(r, \beta, \theta)$ for the PPP.

\section{Proof of Theorem 2}
\label{appendix:B}
The derivation of (\ref{equ:inter_final2}) follows the same line as that of the Type \Rmnum{1} dual-zone hard-core process in Theorem \ref{thm:inter_1} except for the difference in the probability that both of the transceiver pairs are retained, namely, $k(r,\beta,0,\theta)$, which is the probability that two transceiver
pairs are both retained, is different from the Type \Rmnum{1} dual-zone hard-core process. since the time stamp marks are taken into consideration.
We should consider not only the geometric relationship (see Figure \ref{fig:twopair}), but also the relationship between the time stamp marks of the two transceiver pairs.
When $(r,\beta,\theta) \in S_1\bigcup (S_2\bigcap S_3)$, at least one of the considered two transceiver pairs has to be removed after comparing their time stamp marks and thus $k(r,\beta,0,\theta)$ is zero.
When $(r,\beta,\theta)\in\overline{S_1}\bigcap\overline{S_2}\bigcap\overline{S_3}$, there is no direct comparison between the time stamp marks of the considered two transceiver pairs. Let their time stamp marks be $t_1$ and $t_2$ respectively.
So by separately considering the cases of $t_1 \geq t_2$ and $t_1 < t_2$, we get
\begin{align}
\label{eqn:k-two-conditions}
k(r,\beta,0,\theta)=& \int_0^1 e^{-\lambda_p t_1 V_o} \left[\int_0^{t_1} e^{-\lambda_p t_2 (V - V_o)} \mathrm{d}t_2\right] \mathrm{d}t_1\nonumber\\
&+ \int_0^1 e^{-\lambda_p t_2 V_o} \left[\int_0^{t_2} e^{-\lambda_p t_1 (V - V_o)} \mathrm{d}t_1\right] \mathrm{d}t_2\nonumber\\
= 2 \int_0^1 &e^{-\lambda_p t_1 V_o} \left[\int_0^{t_1} e^{-\lambda_p t_2 (V - V_o)} \mathrm{d}t_2\right] \mathrm{d}t_1,
\end{align}
which can be evaluated as $2\eta(V)$. Here, $\int_0^{t_1} e^{-\lambda_p t_2 (V - V_o)} \mathrm{d}t_2$ is the conditional probability that, when the first transceiver pair is marked by $t_1$, there is no other transceiver pair of mark smaller than $t_2 < t_1$ lying in the region covered by the second transceiver pair only; $\int_0^{t_2} e^{-\lambda_p t_1 (V - V_o)} \mathrm{d}t_1$ is analogously interpreted. When either $(r,\beta,\theta)\in\overline{S_1}\bigcap{S_2}\bigcap\overline{S_3}$ or $(r,\beta,\theta)\in\overline{S_1}\bigcap\overline{S_2}\bigcap{S_3}$, exactly one of the considered transmitter is within the RTS/CTS cleaned region of the other, and hence only one of the terms in (\ref{eqn:k-two-conditions}) should be taken into account, resulting into $k(r,\beta,0,\theta) = \eta(V)$.

\bibliographystyle{IEEEtran}
\bibliography{csma-isit}

\end{document}